\documentclass[conference]{IEEEtran}
\IEEEoverridecommandlockouts

\usepackage[hidelinks]{hyperref}
\usepackage{cite}
\usepackage{amsmath,amssymb,amsfonts}
\usepackage{pifont}
\usepackage{algorithm}
\usepackage{algorithmicx}
\usepackage{algpseudocodex} 
\usepackage{graphicx}
\usepackage{textcomp}
\usepackage{xcolor}
\definecolor{okabe1}{HTML}{000000}
\definecolor{okabe2}{HTML}{E69F00}
\definecolor{okabe3}{HTML}{56B4E9}
\definecolor{okabe4}{HTML}{009E73}
\definecolor{okabe5}{HTML}{F0E442}
\definecolor{okabe6}{HTML}{0072B2}
\definecolor{okabe7}{HTML}{D55E00}
\definecolor{okabe8}{HTML}{CC79A7}

\usepackage[]{todonotes} 

\usepackage{amsthm}
\usepackage{cleveref}

\newtheorem{theorem}{Theorem}
\newtheorem{lemma}[theorem]{Lemma}
\newtheorem{corollary}[theorem]{Corollary}

\newtheorem{definition}[theorem]{Definition}
\renewcommand{\emph}[1]{\textit{\textbf{#1}}}

\def\BibTeX{{\rm B\kern-.05em{\sc i\kern-.025em b}\kern-.08em
    T\kern-.1667em\lower.7ex\hbox{E}\kern-.125emX}}

\usepackage{tikz}
\usepackage{textcomp}
\usepackage{hyperref}
\usepackage{lipsum}

\newcommand\copyrighttext{%
  \footnotesize \textcopyright 2026 IEEE. Personal use of this material is permitted.
  Permission from IEEE must be obtained for all other uses, in any current or future
  media, including reprinting/republishing this material for advertising or promotional
  purposes, creating new collective works, for resale or redistribution to servers or
  lists, or reuse of any copyrighted component of this work in other works.}
\newcommand\copyrightnotice{%
\begin{tikzpicture}[remember picture,overlay]
\node[anchor=south,yshift=10pt] at (current page.south) {\fbox{\parbox{\dimexpr\textwidth-\fboxsep-\fboxrule\relax}{\copyrighttext}}};
\end{tikzpicture}%
}

\begin{document}

\title{Raiders of the Lost Log: Synchronous Parallel In-Place Models and Algorithms
\thanks{Research supported in part by NSF grant CCF-2212129.}
}

\author{\IEEEauthorblockN{Michael T. Goodrich}
\IEEEauthorblockA{\textit{Department of Computer Science} \\
\textit{UC Irvine}\\
Irvine, CA \\
\texttt{goodrich@uci.edu}}
\and
\IEEEauthorblockN{Vinesh Sridhar}
\IEEEauthorblockA{\textit{Department of Computer Science} \\
\textit{UC Irvine}\\
Irvine, CA \\
\texttt{vineshs1@uci.edu}}
}


\maketitle
\copyrightnotice

\begin{abstract}

Embedded systems and Internet of Things (IoT) applications motivate
\emph{in-place} parallel algorithms, which avoid allocating additional shared memory past the input. 
Work by Gu, Obeya, and Shun [APOCS '21] defines a family of PIP (parallel in-place) models and parallel algorithms that eschew auxiliary memory at high processor counts while remaining \emph{in-situ} when run sequentially.
However, their models assume asynchronous processing and have no in-place guarantees for intermediate processor counts. 
We address this gap in the literature by proposing a
\emph{Synchronous PIP} family of models for in-place parallel and distributed computation. We demonstrate the effectiveness of our new model by 
giving efficient and synchronous parallel algorithms in this model that require no auxiliary shared memory and only constant private memory per processor. 
Importantly, we show how to leverage a new \emph{parallel-augmented sweep} technique to ensure that Synchronous PIP algorithms remain efficient and strictly in-place at all processor counts. 

\end{abstract}

\begin{IEEEkeywords}
parallel algorithms, parallel models of computation, in-place algorithms, sweep algorithms.
\end{IEEEkeywords}

\pagestyle{plain}
\section{Introduction}

Much of the focus in parallel algorithm design has been on finding parallel algorithms that minimize work, the total number of operations performed, and span, the time until last processor terminates. 
However, data processing tasks have increased in size by orders of magnitude over the past decade. 
Machine rental and energy costs grow linearly with a machine's memory capacity~\cite{gu2021parallel}. 
Furthermore, 
memory bandwidth/latency, cache misses, and page faults all present larger bottlenecks to performance and scalability if shared memory usage is high. 
Thus, memory usage in Big Data tasks becomes a concern equally important to runtime.
This issue is magnified when dealing with shared-nothing data processing frameworks such as MapReduce, Hadoop, and Spark since shared working set memory is not locally accessible but rather distributed over the network. 
Many classic parallel algorithms, 
despite having excellent performance in theory, 
require
$\Omega(n)$ auxiliary memory (see, e.g.,~\cite{blelloch2010low,blelloch2020parallelism,blelloch2020optimal,goodrich1987efficient,amato1994parallel,zhang1991optimal}).
As a result, they scale poorly in practice.

\subsection{Prior Work}
In response, researchers have developed \emph{Parallel In-Place} (PIP) algorithms~\cite{gu2021parallel,guan1991time,guan1992parallel,langston1993time,obeya2019theoretically,zheng1999efficient}, 
which restrict auxiliary space usage. 
Notably, work by Gu, Obeya, and Shun~\cite{gu2021parallel} synthesizes past efforts and introduces two models for parallel
in-place processing: Strong PIP and Relaxed PIP. 
A shared goal of these models is to strike a balance between parallelism and space 
usage by requiring PIP algorithms 
to scale down to space-efficient sequential algorithms.
Specifically,
in the \emph{Strong PIP} model an algorithm (1) allocates no shared memory aside from the input, (2) uses $O(\log n)$ private memory per processor, (3) has polylogarithmic span, and (4) uses at most $O(\log n)$ space when run sequentially. 
In the \emph{Relaxed PIP} model an algorithm (1) allocates $O(t)$ auxiliary shared memory, (2) uses $O(\log n)$ private memory per processor, and (3) has $O((n/t) \text{polylog}(n))$ span, where $t$ denotes a cap on auxiliary memory. 
Importantly,
these PIP models are specializations of the
\emph{software parallel} asynchronous binary-forking model 
of parallel computation, which is based on binary ``fork'' operations that
exist in some parallel programming languages. 

In this paper, we are interested in \emph{hardware parallel} models
for in-place computations, however, where we have a specified number,
$p$, of processors, where $p$ lies somewhere in between $1$ and the input size, $n$. 
Indeed, providing efficient algorithms that do not allocate additional memory for these intermediate processor counts is the main motivation for our new model, \emph{Synchronous Strict PIP}, which is a specialization of the well-known hardware-parallel PRAM model of computation.
We define in-place variants of PRAM models in this paper; hence,
let us briefly review the PRAM model, which is a family of 
hardware-based synchronous
models of parallel computation.
In the Exclusive-Read, Exclusive-Write (EREW) PRAM model, 
processors cannot read
or write to the same shared memory location concurrently. In the
Concurrent-Read, Exclusive-Write (CREW) PRAM model, 
processors can read from shared memory concurrently
but cannot write to the same location concurrently. 
The Concurrent-Read, Concurrent-Write (CRCW) PRAM
model allows for both, where, in the priority-write CRCW PRAM version, write
conflicts are resolved by choosing the largest value written, 
and, in the arbitrary CRCW PRAM version, write conflicts are decided arbitrarily.

We measure the performance of PRAM algorithms in terms of \emph{work}, 
which is the total number of operations performed, and \emph{span}, 
which is the maximum number of operations performed by a single processor if we 
execute the algorithm without bounding the number of processors.
This is motivated
by Brent's theorem~\cite{brent1974parallel}, which states that an algorithm with input size $n$, work $W(n)$, and span $T(n)$ can be run on $p$ processors in time $O(W(n)/p + T(n))$. 

\subsection{Our Contributions}
In this paper, we introduce a family of \emph{Synchronous Strict PIP} models. 

\begin{definition}[Synchronous Strict PIP Model]
Let $1 \leq p\leq n$ be the number of processors available. 
A PRAM algorithm is Synchronous Strict PIP if it 
(1) allocates no shared memory aside from the input, 
(2) uses $\Theta(1)$ private memory per processor, 
(3) has $O((n/p) \mathrm{polylog}(n))$ span, 
and 
(4) uses at most $\Theta(1)$ additional space, i.e., is {strictly in-place}, when run sequentially. 
\end{definition}

Observe that the Synchronous Strict PIP model admits synchronous parallel 
algorithms with strictly stronger memory guarantees than those under the 
asynchronous Strong PIP and Relaxed PIP models.
Indeed, Synchronous Strict PIP 
retains the flexibility of the Relaxed PIP model 
by allowing a range of processor counts
while providing space guarantees that are even stronger than the Strong PIP model. 
To the best of our knowledge, there have not been previous algorithms defined in terms of this Synchronous Strict PIP model. 
The work that comes closest is a parallel random permutation algorithm of Hutton and Melrod~\cite{hutton2025encoding}.
When emulated on a CRCW PRAM, 
their algorithm satisfies our conditions (1), (2), and (4), 
but not (3). 

To show the effectiveness of our hardware-based
models, we give several efficient Synchronous Strict PIP algorithms for fundamental parallel problems.
Our results are summarized in \Cref{tab:summary}. 
In Sections~\ref{sec:IJ} and~\ref{sec:more-algos}, 
we introduce the \emph{IJ swap}, an algorithmic technique used throughout the work and give several parallel in-place algorithms analyzed in the work-span framework.
In \Cref{sec:sublinear}, we present our second main contribution, the \emph{parallel-augmented sweep} paradigm. 
This is an algorithmic insight that allows us to efficiently schedule our in-place algorithms analyzed 
in the work-span framework to run on $p < n$ processors while maintaining the same in-place guarantees. 

\begin{table}[t!]
  \begin{center}
  \begin{tabular}{|c|c|c|c|}
    \hline
    {\bf Algorithm} & {\bf Work-Efficient} & {\bf PIP} \\
    \hline
    Prefix$^*$ & \checkmark & Synch. Strict \\
    \hline
    Packing/Partition$^*$ & \checkmark & Synch. Strict\\
    \hline
    QuickSort & \checkmark & Strong\\
    \hline
    List Contraction$^*$ & \checkmark & Synch. Strict\\
    \hline
    Tree Contraction$^*$ & \checkmark & Synch. Strict\\
    \hline
    Rand. Permutation$^*$ & \checkmark & Synch. Strict\\
    \hline
    Convex Hull & \checkmark & Strong\\
    \hline
    Convex Hull Presorted$^*$ &  & Synch. Strict\\
    \hline
    SampleSort$^*$ & \checkmark & Relaxed\\
    \hline
  \end{tabular}
\end{center}
  \caption{A summary of our new parallel in-place algorithms.
  We indicate our main results with $^*$.
  }
  \label{tab:summary}
\end{table}

We also provide an orthogonal result that is of independent interest. 
We apply a novel balls in bins analysis to develop our Relaxed PIP SampleSort, 
the first PIP comparison sorting algorithm with optimal span and optimal work 
(cf.~\cite{axtmann2022engineering,zheng1999efficient}). 
\section{The Indiana Jones Swap}\label{sec:IJ}

In this section, we introduce the \emph{Indiana Jones} (IJ)
swap, which is a technique used throughout this paper. This technique was
inspired by an iconic scene in the movie,
\textit{Raiders of the Lost Ark}, 
in which Indiana Jones attempts to steal a Golden Idol
by swapping it with a bag of sand.
See Figure~\ref{fig:ij}.

\begin{figure}[hbt]
\centering
\includegraphics[width=\columnwidth]{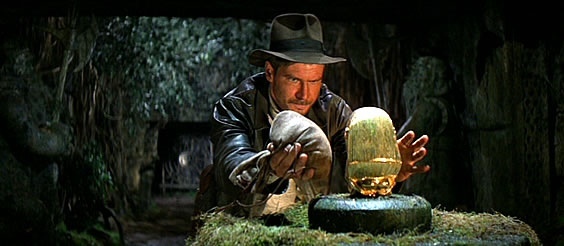}
\caption{
A still from \textit{Raiders of the Lost Ark}. 
\label{fig:ij} 
Directed by Steven Spielberg, Lucasfilm Ltd.~and Paramount Pictures, 1981.
This image is from copyrighted material, the use of which has not 
been specifically authorized by the copyright owner. 
The author(s) have determined this to be fair use of the copyrighted material 
as referenced and provided for in \S~107 of the U.S.~Copyright Law.}
\end{figure}

Our technique similarly involves
each processor strategically swapping data between shared memory
and its private stash. If applied with care, we do not need to
allocate any more shared memory than what is required to store the
input.  For the remainder of this section, we demonstrate the IJ
swap and sub-techniques we call \emph{stashing} and \emph{windowing}
by implementing fundamental parallel algorithms in-place.

\subsection{Parallel Prefix}\label{sec:prefix}

The stashing technique simply involves processors making a copy of the element at their respective indices of the input array, copying it into their 
private stash memory. Such a step requires constant span and work proportional to the number of processors. We frequently use this operation to preserve values that would be destroyed or to free up space for some in-place computation.

Now we consider the parallel prefix problem. 
We begin by observing that the recent Strong PIP prefix algorithm of Gu, Obeya, and Shun~\cite{gu2021parallel} can be easily adapted to our framework. 
Their algorithm is a simple variant of the standard work-efficient tree-based parallel prefix algorithm~\cite{blelloch1990prefix}.
This algorithm performs two passes over the data, the up-sweep and the down-sweep. 
In the up-sweep, we treat each input element as a leaf of a balanced binary search tree. 
The value of each leaf's parent contains the sum of their children, so the root of the tree contains the total sum. 
Each level of the tree can be computed in parallel in $O(1)$ span and work proportional to the number of sums performed. 
Thus, the up-sweep takes $O(\log n)$ span and $O(n)$ work since the number of sums is halved after each level.

In the down-sweep, we use the intermediate sums to compute the true prefix values of the array. Now we traverse from the root down to the leaves. 
We set the root's value to 0. 
In the first round, we pass the sum of the root's value with its left child's value to its right child. We pass the root's original value to its left child. 
This is repeated in the next round for the root's children, and so on for $O(\log n)$ rounds. The final values of each leaf represent the final prefix values. The down-sweep also takes $O(\log n)$ span and $O(n)$ work.

In general, representing this binary tree takes $O(n)$ extra space. Gu, Obeya, and Shun~\cite{gu2021parallel} make the simple observation that we can overwrite values of the original array to maintain the intermediate sums. In the up-sweep, whenever we are summing elements $a$ and $b$, we simply overwrite $b$ with $a+b$. The root of the tree is held at the end of the array. In the down-sweep, the nodes of the ``binary tree'' can be recovered by index manipulation. At depth $i$, each node's right child is at its own index and its left child is $n/2^{i+1}$ indices to the left. They show that both passes still take $O(\log n)$ span and $O(n)$ work. 



We first observe that their algorithm is ``destructive'', in the sense that 
it overwrites the array's original contents after performing the operation. 
We can remedy this by performing an 
initial stashing operation taking $O(1)$ span and $O(n)$ work to preserve 
the original values in each processor's private memory. 
Notably, this allows us to implement packing optimally, which we show in the following section.

Using standard techniques, we observe that their algorithm can be implemented in 
the EREW PRAM model with $O(\log n)$ span and $O(n)$ work. 
Furthermore, we note that computation is only done prior to each recursive call in both the up-sweep and down-sweep of the algorithm. 
This means that processors can simply terminate when they finish work 
at the bottom of the recursive tree 
and so do not have to maintain any information about prior recursive calls.  
Thus, 
if there are $\Theta(n)$ processors, 
each processor only needs $O(1)$ private 
memory to maintain information relevant to their current recursive call as well as the stashed original data.

\begin{theorem}
There exists an algorithm that performs parallel prefix with $O(\log n)$ span 
and $O(n)$ work in the EREW PRAM model when each of $n$ processors 
has $O(1)$ private memory.
\end{theorem}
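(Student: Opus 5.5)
The plan is to make the in-place up-sweep/down-sweep of Gu, Obeya, and Shun explicit, and check three things: EREW access, $O(1)$ private memory per processor, and the work/span bounds. Assume without loss of generality that $n$ is a power of two; otherwise pad conceptually with identity elements that are never stored, so indices beyond $n$ are simply skipped. Processor $i$ is permanently assigned to index $i$ of the array $A[1..n]$. In an initial stashing step, processor $i$ copies $A[i]$ into a private register $s_i$. This is exclusive since each processor touches only its own cell, costs $O(1)$ span and $O(n)$ work, and lets us output either inclusive or exclusive prefixes and restore the input if needed.

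\textbf{Up-sweep.} For $d = 0,1,\dots,\log n - 1$, the processor at each index $j$ with $j \equiv 0 \pmod{2^{d+1}}$ reads $A[j-2^d]$ and $A[j]$ and writes their sum into $A[j]$; all other processors idle. In a fixed round the active indices are distinct, and so are the read locations $j-2^d$, since they lie strictly between consecutive active $j$'s. Hence no cell is read or written by two processors in the same step, and the round is EREW. The number of active processors in round $d$ is $n/2^{d+1}$. Summing over rounds gives $O(n)$ work; the span is $O(\log n)$ with $O(1)$ per round.

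\textbf{Down-sweep.} Set $A[n] \gets 0$. For $d = \log n - 1$ down to $0$, each $j \equiv 0 \pmod{2^{d+1}}$ does the following: read $t = A[j-2^d]$, write $A[j-2^d] \gets A[j]$, and write $A[j] \gets A[j] + t$. The same disjointness argument shows each round is exclusive. The reads and writes can be sequenced inside a round in $O(1)$ synchronous substeps so that no concurrent access occurs even across a processor's own pair of cells. Correctness follows by the standard invariant: after processing level $d$, every index $j \equiv 0 \pmod{2^d}$ holds the exclusive prefix sum up to the start of its block of size $2^d$. This is proved by downward induction on $d$ using the up-sweep invariant that $A[j]$ held the block sum of $(j-2^{d+1}, j]$ and $A[j-2^d]$ held that of its left half. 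Finally each processor adds its stashed $s_i$ to get the inclusive prefix.

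\textbf{Memory and the main obstacle.} The step I expect to need the most care is the private-memory bound. The algorithm is described recursively, but processor $i$ need only store its index $i$, the current level $d$ (a loop counter), one temporary $t$, and the stash $s_i$. These are $O(1)$ words in the word-RAM sense, where a word holds $\Theta(\log n)$ bits, and the activity test $j \equiv 0 \pmod{2^{d+1}}$ is computable from $i$ and $d$ by $O(1)$ arithmetic. No recursion stack is needed because all computation occurs before descending, so the loop form is equivalent. This yields $O(\log n)$ span and $O(n)$ work on an EREW PRAM with $n$ processors, each using $O(1)$ private memory.
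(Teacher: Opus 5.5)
Your proposal is correct and follows the paper's argument: stash the inputs, run the Gu--Obeya--Shun in-place up-sweep/down-sweep via index arithmetic, and observe that no recursion stack is needed, so each processor holds only $O(1)$ words; you also supply the EREW disjointness check and the down-sweep invariant that the paper defers to ``standard techniques.'' Your padding remark needs one small repair, because padding with identities at indices beyond $n$ and skipping them fails (for $n=3$ the virtual root $A[4]$ must be zeroed and pass $0$ down to $A[2]$, so skipping it corrupts the output), whereas padding on the left instead, i.e., declaring $j$ active in round $d$ when $2^{d+1} \mid n-j$ so that the root is $A[n]$, keeps every virtual cell equal to the identity throughout both sweeps and makes skipping any operation that touches an index below $1$ valid.
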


As Gu, Obeya, and Shun note, the naive serialization of this algorithm gives a sequential algorithm that consumes $O(\log n)$ space due to recursive overhead~\cite{gu2021parallel}. 
In \Cref{sec:sublinear}, we apply the 
parallel-augmented sweep technique to 
generalize the above algorithm such that 
it is efficient and in-place for all processor counts $p \leq n$. 
This naturally implies a strictly in-place sequential algorithm. 


\subsection{Packing and Parallel Partition}\label{sec:packing-partition}
Recently, Kuszmaul and Westover~\cite{kuszmaul2020cache} used inversion encoding to implement a parallel partition algorithm in-place in the EREW PRAM
model. However, their implementation takes $O(\log n \log\log n)$ span, assumes $O(\text{polylog } n)$ private space per processor, 
and requires a complicated workaround to handle duplicate elements. 
In this section, we present a simple algorithm using the IJ swap that has optimal span, requires only $O(1)$ space per processor, and handles duplicate elements automatically. 
As an aside, 
our partition algorithm immediately admits the first optimal, $O(\log^2 n)$-span Strong PIP QuickSort algorithm in the EREW PRAM. 

Before discussing the parallel partition problem, let us first
discuss a related problem
called \emph{packing}. In its simplest form, the packing problem
is given an array of length $n$ containing 0's and 1's in any order
and we wish to pack the 1's such that they appear in the beginning of
the array.  To begin, we apply our in-place prefix sum to the array
and stash the result.  For the entries containing 1's, this indicates
their new location in the final array.  Using the stashed original values,
we can repeat this process but instead have the 0's contribute 1
to the prefix sum and the 1's contribute 0. This gives the 0's their
relative rankings.  To produce their actual rankings, we simply add
the total number of 1's to each 0's rank.  


Lastly, each processor writes their value to the index of the array corresponding to its rank. 
Since each processor can stash their element, we do not have to worry about overwriting other elements. 
The algorithm consists of a constant number of stashing and swapping operations, 
each of which take $O(1)$ span and $O(n)$ work and use at most $O(1)$ words of private memory per processor. 
In addition, the algorithm performs a constant number of in-place parallel prefixes, which take $O(\log n)$ span and $O(n)$ work and use $O(1)$ private memory per processor. 
We conclude with the following. 

\begin{theorem}
There exists an algorithm that performs parallel packing with $O(\log n)$ span
and $O(n)$ work in the EREW PRAM model when each 
of $n$ processors has $O(1)$ private memory.
\end{theorem}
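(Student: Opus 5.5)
The argument is a composition of the prefix theorem above with a constant number of stash/write phases, where processor $i$ is assigned to array index $i$ throughout. We then check the three resource bounds and exclusivity of every read and write.

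\textbf{Step 1 (stash).} Each processor $i$ copies $A[i]$ into a private register $v_i$. This takes $O(1)$ span and $O(n)$ work, and each processor touches only its own cell.

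\textbf{Step 2 (ranks of the 1's).} Run the in-place prefix algorithm of the preceding theorem on $A$ and take the inclusive prefix, so that processor $i$ with $v_i=1$ learns its target index $r_i = \sum_{j\le i} A[j]$. Processor $n$ additionally stashes the total count $k$ of 1's. Each processor keeps $r_i$ privately.

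\textbf{Step 3 (restore and rank the 0's).} Each processor writes $1-v_i$ back into $A[i]$ and runs the prefix algorithm again. A processor with $v_i=0$ obtains $r_i = k + \sum_{j\le i}(1-v_j)$. The value $k$ must reach every processor. A concurrent read is not allowed in EREW, so I would broadcast it by writing $k$ into $A[1]$ after the second prefix and doubling copies across the array in $O(\log n)$ rounds. This is a standard EREW broadcast and is itself a prefix-like sweep, so it also fits within the bounds.

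\textbf{Step 4 (scatter).} Each processor writes $v_i$ to $A[r_i]$. Because $i\mapsto r_i$ is a bijection onto $\{1,\dots,n\}$, every write goes to a distinct cell, so the step is exclusive-write. Every original value is already held privately, so no value is lost when cells are overwritten. This step is where the IJ swap does the real work.

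Each step uses $O(1)$ private words per processor: $v_i$, $r_i$, $k$, plus whatever the prefix routine itself needs. The steps are a constant number of $O(1)$-span phases and $O(\log n)$-span prefix or broadcast phases. This gives $O(\log n)$ span and $O(n)$ work overall, with no shared memory beyond $A$.

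The main obstacle is ensuring exclusivity and constant private memory when the prefix routine is invoked as a black box. Its private state must be discarded between calls, and $r_i$ must not be clobbered; this is handled by keeping $r_i$ and $v_i$ in registers disjoint from those the routine uses. The EREW distribution of $k$ must also be treated carefully rather than assumed free.
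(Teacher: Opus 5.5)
Your proposal is correct and follows the paper's argument step for step: stash, a prefix sum to rank the 1's, a second prefix with complemented values to rank the 0's offset by the count of 1's, then an exclusive scatter from the private stashes. Your explicit $O(\log n)$-span doubling broadcast of $k$ fills in an EREW detail that the paper leaves implicit, and it stays within the stated span and work bounds.
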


In the parallel partition problem, we are given a list of $n$ elements and a pivot $q$. We wish to return a list such that all elements less than $q$ are to its left in the array and all elements greater than or equal to $q$ are to its right. 
Assume that $q$ is initialized as the last element in the array. 
Observe that we can reduce the partitioning problem to parallel packing in which an element is assigned a 1 if it is smaller than the pivot and a 0 otherwise. 
Thus, we immediately have the following.

\begin{corollary}
There exists an algorithm the performs parallel partition in-place 
in $O(n)$ work and $O(\log n)$ span in the EREW PRAM model when 
each of $n$ processors has $O(1)$ private memory.
\end{corollary}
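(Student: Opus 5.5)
The plan is to reduce partitioning directly to the packing theorem just established, making sure that the reduction itself uses only $O(1)$ span, $O(n)$ work, exclusive accesses, and $O(1)$ private memory per processor. There are four steps.

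\textbf{Step 1: read the pivot without concurrent reads.} The pivot $q$ sits at index $n-1$, and every processor needs its value. On an EREW PRAM we cannot let all $n$ processors read it at once. I would instead broadcast $q$ in-place with a doubling procedure over the array itself. First, every processor stashes its own element $A[i]$. Then in round $j$, each processor $i$ with $n-1-2^{j+1} < i \le n-1-2^j$ copies $q$ from index $i+2^j$ into index $i$, after having read $q$ into its private memory. These reads and writes are exclusive, and after $O(\log n)$ rounds every processor holds $q$ privately. Each element also survives in its owner's stash, so nothing is lost by overwriting the array. This costs $O(\log n)$ span and $O(n)$ work.

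\textbf{Step 2: compute each element's flag.} Each processor $i$ compares its stashed element against $q$ and sets a bit $b_i = [A[i] < q]$. The bit lives in private memory, and we also write it into $A[i]$ so that the prefix computations can run over the array.

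\textbf{Step 3: compute destination ranks.} We run the packing routine from the previous theorem on these bits. Its two in-place prefix sums give each processor its destination rank. Elements with $b_i=1$ get their rank among the 1's. Elements with $b_i=0$ get the total number of 1's plus their rank among the 0's. The total count of 1's is obtained from the last prefix value, broadcast in the same way as in Step 1. Each processor keeps only its stashed element, its bit, and its rank, which is $O(1)$ words. Note that this packing is applied to bits derived from the elements, not to the elements themselves, so duplicates cause no special handling. Every element has a distinct rank, and ties with $q$ simply fall on the $\ge$ side.

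\textbf{Step 4: write the output.} Every processor writes its stashed original element to the index given by its rank. The ranks form a permutation of $\{0,\dots,n-1\}$, so these writes are exclusive, and they reconstruct the array in partitioned order. Correctness follows because all 1-flagged elements, the ones smaller than $q$, occupy the prefix of length equal to the number of 1's, and the rest follow.

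\textbf{Cost and the expected obstacle.} Summing the steps gives $O(\log n)$ span and $O(n)$ work, with no shared memory beyond the input. The main obstacle I expect is the EREW bookkeeping: broadcasting $q$ and the total count without concurrent reads, while never destroying an element not yet stashed. Stashing every element before any overwrite handles the second concern, and the doubling broadcast handles the first.
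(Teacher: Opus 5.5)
Your proposal is correct and takes the same approach as the paper, which proves the corollary by a one-line reduction to the packing theorem: flag each element with $1$ if it is smaller than the pivot $q$ and with $0$ otherwise. Your additional EREW bookkeeping is sound and fills in details the paper leaves implicit, namely the doubling broadcast of $q$ and of the count of $1$'s (avoiding concurrent reads), and writing each stashed original element to its computed rank.
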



\subsection{Deterministic Reservations and Windowing}

In this section, we describe our second sub-technique, \emph{windowing}. We do so by implementing in-place algorithms that perform list contraction and tree contraction in polylogarithmic span and linear work for the EREW PRAM. 

We adapt algorithms from Shun, Gu, Blelloch, Fineman, and Gibbons~\cite{shun2014sequential},
who developed their algorithms using a \emph{deterministic reservation} framework, a paradigm in which parallel algorithms hew closely to their sequential counterparts~\cite{shun2014phase,shun2014sequential,blelloch2012internally,blelloch2012greedy,gu2021parallel,shun2013reducing}. The basic idea is to retain the logic of the sequential algorithm, but have each processor attempt to do each step in parallel. 
In most algorithms, this will produce conflicts as processors attempt to do operations concurrently that must be done in a certain order. 
To handle such conflicts, 
algorithms in this framework use \emph{reservations}, 
wherein processors write some data related to the operation they intend to do. 
Operations in conflict may attempt to write to the same place. 
A processor whose reservation is preserved after each step is permitted to do their operation, 
ensuring that no conflicting operations are committed at the same time. 

In the context of list contraction and tree contraction, conflicts occur when two processors attempt to contract the same node. In both problems, Shun {\it et al.}~handle this by initializing a reservation boolean array $R$. In each round of the algorithm, if the processor has an operation to commit, they first reserve their location in $R$. Then, if their reservation holds, they subsequently commit the operation by manipulating data in $L$. See~\Cref{alg:list-contraction}.

\begin{algorithm}[hbt]
	\caption{Non-in-place List Contraction algorithm of~\cite{shun2014sequential}.}\label{alg:list-contraction}
	\begin{algorithmic}[1]
		\State{\textbf{Input:} Array $L$ of size $3n$ containing $n$ nodes and $2n$ pointers to each node's predecessor and successor.}
        \State{\textbf{Output:} Contracted list $L$.}
        \State $R = \{0, \ldots, 0\}$\Comment{boolean array}

        \Procedure{reserve}{$i$}
            \If{$i < L[i].\text{prev}$ and $i < L[i].\text{next}$}
                \State{$R[i] = 1$}\Comment{reserve own location}
            \EndIf
        \State \Return{1}
        \EndProcedure
        \Procedure{commit}{$i$}
            \If{$R[i] = 1$}
                \If{$L[i].\text{prev} \neq \text{null}$}
                    \State $L[L[i].\text{prev}].\text{next} \leftarrow L[i].\text{next}$
                \EndIf
                \If{$L[i].\text{next} \neq \text{null}$}
                    \State $L[L[i].\text{next}].\text{prev} \leftarrow L[i].\text{prev}$
                \EndIf
                \State \Return{0}
            \Else
                \State \Return{1}
            \EndIf
        \EndProcedure
	\end{algorithmic}
\end{algorithm}

In Shun {\it et al.}'s list contraction algorithm~\cite{shun2014sequential}, the main operation that is not in-place involves the allocation of $R$. We make the trivial observation that, rather than allocating a new array $R$, we can simply use $L$ to store reservations. Have processor $p_i$ temporarily swap $L[i]$ with its reservation value (either a 1 or 0) after the \textsc{reserve} phase. Then, after the first if statement in the \textsc{commit} phase, we can swap the original values of $L$ back in and proceed as normal. We call this temporary revealing of metadata \emph{windowing}. 

After each round of reserving and committing in~\Cref{alg:list-contraction}, we perform a packing step to compact all the processors that have not yet committed their operation~\cite{shun2014sequential}. 
We can free up space to do this by temporarily stashing the first $p$ elements in $L$, where $p$ is the current number of active processors, and then applying our in-place packing algorithm from \Cref{sec:packing-partition}. 

An analysis by Shun {\it et al.}~\cite{shun2014sequential} shows that this algorithm takes $O(\log n)$ rounds with high probability in $n$,\footnote{
An algorithm succeeds with high probability in $n$ (w.h.p.) if it succeeds with probability greater than $1 - 1/n^c$ for some constant $c \ge 1$.} 
assuming the ordering of $L$ is randomized. 
Standard techniques can be used to ensure that all reads and writes are exclusive.
Thus, we have the following.

\begin{theorem}
There exists an algorithm that solves list contraction in the EREW PRAM model
when each of $n$ initial 
processors has $O(1)$ private memory in $O(\log^2 n)$ span w.h.p. and $O(n)$ work in expectation, assuming the list $L$ is in random order.
\end{theorem}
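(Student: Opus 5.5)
The plan is to simulate the deterministic-reservation list contraction of Shun \emph{et al.}~\cite{shun2014sequential} (\Cref{alg:list-contraction}) round by round. Every auxiliary structure is replaced by an in-place device introduced earlier: windowing for the reservation array $R$, and the in-place packing theorem for compacting the remaining active processors.

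\textbf{Step 1: one round in place.} Processor $p_i$ is responsible for node $i$ and keeps in private memory a constant amount of state: its index, the stashed entry $L[i]$, and its reserve/commit flag.

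\textsc{reserve}: $p_i$ reads $L[i]$, compares $i$ with $L[i].\mathrm{prev}$ and $L[i].\mathrm{next}$, stashes $L[i]$, and writes its bit into $L[i]$ (the window).

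\textsc{commit}: $p_i$ reads back its own bit and swaps the stashed original value back into $L[i]$. Only then does it perform the pointer updates. Because the window is restored before any pointer update, no processor ever reads a reservation bit where a node record should be.

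\textbf{Step 2: exclusivity.} In \textsc{reserve} the only remote accesses are to the neighbors' indices. Those indices are already held in $L[i]$, so $p_i$ reads only its own cell and no sharing arises. In \textsc{commit}, two committing nodes are never adjacent, since each is a local minimum among its neighbors. Hence $p_i$ writes $L[\mathrm{prev}].\mathrm{next}$ and $L[\mathrm{next}].\mathrm{prev}$, two distinct fields that no other committer touches. The only nonlocal concern is that each cell can be read by at most two neighbors within one step. Splitting every step into a constant number of substeps (prev-side, then next-side) makes all reads and writes exclusive, at only a constant-factor cost.

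\textbf{Step 3: compaction.} After each round, the indices of still-uncommitted nodes are packed into a prefix. This is done by stashing the first $p$ cells of $L$, where $p$ is the current number of active processors, into the private memories of those $p$ processors. Each active processor then writes its index (with a 0/1 flag) into that freed region, and the in-place packing theorem is applied to it. The packed indices are redistributed to the first processors, and the stashed cells are written back. The obstacle I expect is this bookkeeping. The processor that stashed cell $j$ must be the one that restores it, and each surviving processor must recover its node index (and thus its original stash of $L[i]$) after the packing permutes the region. I would handle this by having each active processor re-read $L[i]$ after write-back rather than carrying a stash across rounds. Then only $O(1)$ words persist per processor: the index it receives from the packed region, plus the cell it stashed, which it restores before the next round.

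\textbf{Step 4: accounting.} Each round takes $O(\log n)$ span, dominated by packing. Shun \emph{et al.} show $O(\log n)$ rounds w.h.p.\ for a randomly ordered $L$, giving $O(\log^2 n)$ span w.h.p. For work, round $r$ costs $O(p_r)$, where $p_r$ is the number of active nodes. Their analysis shows that the active set shrinks geometrically in expectation (each node is a local minimum with constant probability), so $\sum_r E[p_r]=O(n)$, giving $O(n)$ expected work. Private memory stays $O(1)$ throughout, since no recursion state is carried across rounds.
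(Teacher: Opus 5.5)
Your proposal is correct and follows the paper's argument essentially step for step. It windows $L[i]$ to stand in for the reservation array $R$, temporarily stashes the first $p$ cells of $L$ so the in-place packing routine can compact the uncommitted processors after each round, and invokes Shun \textit{et al.}'s $O(\log n)$-round bound for a randomly ordered list to get $O(\log^2 n)$ span. You also fill in details the paper leaves to ``standard techniques,'' such as exclusivity of the pointer updates and which processor restores which stashed cell. Like the paper's, your expected $O(n)$ work bound ultimately rests on Shun \textit{et al.}'s analysis; the per-node parenthetical you offer is only a heuristic once you condition on surviving earlier rounds.
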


Their tree contraction algorithm has an almost identical structure~\cite{shun2014sequential}, so the same technique implies the following.

\begin{corollary}
There exists an algorithm that solves tree contraction (on complete binary trees with randomly-ordered leaves) in the EREW PRAM model when each 
of $n$ initial processors has $O(1)$ private memory in $O(\log^2 n)$ span w.h.p. and $O(n)$ work in expectation.
\end{corollary}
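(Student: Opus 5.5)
The plan is to reduce this corollary to the list contraction theorem and its proof, exactly as the statement suggests: Shun {\it et al.}'s tree contraction algorithm~\cite{shun2014sequential} has the same reserve/commit structure as \Cref{alg:list-contraction}. The outer loop is a sequence of rounds. In each round every active processor (one per leaf, in the random leaf order) tries to rake its leaf and compress its parent. It first writes a reservation into a boolean array $R$. If its reservation survives, it commits by splicing pointers in the tree representation $T$. The unfinished processors are then packed. So I would go through the three places where that algorithm uses non-input memory and replace each with one of the techniques already established in this section.

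The first step is to get rid of $R$ by windowing. After the reservation phase, processor $p_i$ stashes its own entry $T[i]$ and writes its reservation bit there. Conflicts in tree contraction arise when a leaf and its sibling, or leaves sharing a grandparent, try to contract the same parent. For those cases I would reserve at the leaf's own index using the same ``smaller index than neighbors'' rule on the leaf ordering. This makes every reservation check an exclusive read at a fixed, computable location. Each processor then reads the bit for itself and its relevant neighbor, restores $T[i]$ from its stash, and commits the pointer updates. This uses $O(1)$ private words per processor and $O(1)$ span per round. The second step is to replace the packing of surviving processors with the in-place packing theorem. The first $p$ cells, where $p$ is the current number of active processors, are stashed to create scratch space, then packed and restored. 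Each round then costs $O(\log n)$ span and $O(p)$ work with $O(1)$ private memory.

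The third step is the analysis, which I take directly from Shun {\it et al.}: with randomly ordered leaves there are $O(\log n)$ rounds w.h.p., and the number of active processors decreases geometrically in expectation. Together these give $O(\log^2 n)$ span w.h.p. and $O(n)$ expected work. Exclusive reads and writes follow from the standard techniques invoked for list contraction.

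The main obstacle is the first step. The conflict pattern in tree contraction involves a leaf, its sibling, and its parent, rather than only two list neighbors. So I must check that every reservation a processor needs to read lives at a cell owned by some leaf processor. That cell must be windowed and restored within the same round without clobbering pointer data that another processor reads during commit. If the parent's cell is needed, I would instead have the reservation live in the stash of the leaf with the smaller index among the sibling pair. That processor resolves the conflict and announces the result via its own windowed cell, which keeps all accesses exclusive and within $O(1)$ private words.
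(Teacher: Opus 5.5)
Your proposal is correct and matches the paper's argument. The paper only observes that Shun et al.'s tree contraction has the same reserve/commit-plus-packing structure as list contraction, so windowing the reservation array into the input, in-place packing (after stashing the first $p$ cells) between rounds, and their $O(\log n)$-round, linear-expected-work analysis carry over; your extra discussion of where reservations live and how neighbor reads are kept exclusive fills in details the paper leaves to ``standard techniques'' without changing the route.
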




\section{Efficient Parallel Algorithms with Constant Space Per Processor}\label{sec:more-algos}

In this section, we demonstrate the IJ swap further by implementing non-trivial algorithms for in-place random permutations, 2D convex hulls, and sorting.

\subsection{Random Permutation}\label{sec:random-perm}
We now adapt another deterministic reservations algorithm of 
Shun, Gu, Blelloch, Fineman, and Gibbons~\cite{shun2014sequential}, 
this time for computing a random permutation of an array, $A$. 
Their algorithm is a parallelization of the sequential Knuth shuffle algorithm for the priority-write CRCW PRAM. 
In the Knuth shuffle, we iterate backwards through the array and 
repeatedly swap current element $A[i]$ with some randomly selected element from indices 0 to $i$. 
Shun {\it et al.}'s parallelized version works like so. Given $n$ processors, 
their algorithm assigns each processor to an array index $i$. 
We have them each generate $H[i]$, a random value between $0$ and $i$. Each processor's goal is to commit a swap between $A[i]$ and $A[H[i]]$. 
However, swaps that overlap the same index are in conflict and cannot be done concurrently. 
To solve this, the algorithm of Shun {\it et al.} uses reservations. 

At the beginning of each round, we initialize an array of reservations, $R$,
in which all values are set to $-1$. Each processor $p_i$ ``reserves'' their swap by writing $i$ to $R[i]$ and $R[H[i]]$. This is a priority write operation, so the maximum value of $i$ written concurrently is preserved. Each processor $p_i$ then reads $R[i]$ and $R[H[i]]$. If their two reservations were preserved, then processor $p_i$ commits their swap in $A$. Shun {\it et al.} show that the depth of these conflicts is at most $O(\log n)$ w.h.p.~\cite{shun2014sequential}. As a result, after $O(\log n)$ rounds, every processor will have committed their swap w.h.p. 
Once again,~\cite{shun2014sequential} use packing after each round to limit the work of inactive processors. 

Using the IJ swap, we can make their algorithm in-place.
Just like our list contraction and tree contraction algorithms, we can perform windowing on the input array $A$ to simulate $R$. 
However, we must be more careful since processors can read and write to multiple, sometimes overlapping, locations of $R$ rather than just a single location.
Each active processor $p_i$ begins each round by stashing $A[i]$ and $A[H[i]]$ in private memory. Then each $p_i$ writes a $-1$ to $A[i]$ and $A[H[i]]$ concurrently. 
Since each index that an active processor intends to reserve is set to a $-1$, we can act as if $A$ is the reservation array $R$.
Next, each $p_i$ performs a priority write to reserve $A[i]$ and $A[H[i]]$ in parallel. 
If a processor observes that its reservation was overwritten, it immediately writes its stashed values back to their original spots. 
Finally, the processors with successful reservations write their stashed values back in swapped order. 

As with list and tree contraction, we use our in-place packing method after each round. 
Stashing and windowing add $O(1)$ span to each round. 
Since only active processors perform such steps, we can charge the extra work done as a constant multiple of the work done in the original algorithm to read and write reservations. As a result, total work increases by no more than a constant factor. 
In combination with a runtime analysis by 
Shun {\it et al.}~\cite{shun2014sequential}, we conclude the following.

\begin{theorem}
There exists an algorithm that computes a random permutation in 
the priority-write CRCW PRAM model when each of $n$ initial
processors has $O(1)$ private memory in $O(\log^2 n)$ span w.h.p. and $O(n)$ work in expectation.
\end{theorem}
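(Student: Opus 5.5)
The plan is a simulation argument. I will show that the windowed execution reproduces, round by round, exactly the swaps committed by the reservation algorithm of Shun \emph{et al.}~\cite{shun2014sequential}. Once that holds, their bounds carry over: $O(\log n)$ rounds w.h.p. and $O(n)$ expected total reservation work. Each round adds $O(1)$ span for stashing and windowing, plus one in-place packing step, which by the packing theorem of \Cref{sec:packing-partition} costs $O(\log n)$ span and work linear in the current number of active processors. Multiplying $O(\log n)$ rounds by $O(\log n)$ span per round gives the $O(\log^2 n)$ span bound w.h.p.

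For work, the original analysis already charges each round work proportional to the number of active processors. Our stashing, the $-1$ writes, the restorations and the packing are each $O(1)$ per active processor. Total work is therefore a constant multiple of the original, namely $O(n)$ in expectation. Private memory per processor is $O(1)$: the index $i$, the value $H[i]$, the two stashed entries, and $O(1)$ words for the prefix and packing subroutines.

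The main obstacle is correctness of the windowing when reservation cells coincide with data cells, since $A$ itself plays the role of $R$. I would argue it in three steps.
\begin{itemize}
\item First, every cell $A[j]$ that any active processor touches in a round is stashed by at least one processor before any $-1$ is written. I would separate stash reads and $-1$ writes into distinct synchronous steps. Concurrent reads of the same cell are allowed in the CRCW model, and the concurrent $-1$ writes all carry the same value, so the write rule is irrelevant for them.
\item Second, after the priority writes, each touched cell $A[j]$ holds the maximum index among the processors that reserved $j$. This is exactly the value that $R[j]$ would hold in the original algorithm, so the set of processors that commit is identical.
\item Third, I must show that the restoration writes do not conflict or lose data. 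A committing processor $p_i$ owns both $A[i]$ and $A[H[i]]$ uniquely. No other processor can commit a swap touching those cells, and any losing processor that stashed one of them holds the same original value. Hence all writes to a given cell in the restore step carry the same value, or come from the unique winner. To keep this clean I would order the step as losers restore first, then winners write their swapped values. A loser's write to a cell won by someone else restores the original value, which the winner then overwrites with the correct swapped value.
\end{itemize}
Cells touched by no active processor are never modified. With these three steps, $A$ after the round equals $A$ after the corresponding round of the original algorithm.

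Finally, packing runs as in list contraction. I temporarily stash the first $p$ entries of $A$, where $p$ is the number of active processors, and use that freed space for the in-place packing of active processor identifiers. Each processor then restores its stashed entry. This completes the reduction, and the theorem follows from the analysis of~\cite{shun2014sequential}.
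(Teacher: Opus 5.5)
Your proposal is correct and follows the paper's own argument. Like the paper, you simulate the reservation array $R$ inside $A$ by stashing $A[i]$ and $A[H[i]]$, writing $-1$, and priority-writing indices. Losers restore their cells and winners then write swapped values, an in-place packing step follows each round, and the $O(\log n)$-round and $O(n)$-expected-work bounds come from Shun \textit{et al.} Your checks that all restore writes to a cell carry the same original value and that winners write after losers make explicit what the paper's ``Finally'' step relies on.
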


\subsection{Convex Hull}\label{sec:convex-hull}
We next show how to construct the upper hull of a set of $n$ points in the plane in-place, assuming the input is sorted by its $x$-coordinate. 
Our algorithm requires $O(\log n)$ span and $O(n\log n)$ work in the 
CREW PRAM model. The lower hull can be computed symmetrically.\footnote{
If points are sorted radially, 
the entire convex hull can be recovered at once (see, e.g.,~\cite{preparata1979optimal}). 
We consider the upper hull for simplicity.
}
Our key insight is to represent each intermediate hull as a permutation of its hull points, rather than in an external data structure. 
Indeed, this technique is used often in sequential in-place algorithms for computational geometry~\cite{bronnimann2004towards,bose2007space,chen2003space,blunck2010place,chan2008place,bronnimann2002place}. 
Surprisingly, we believe we are the first to do so in parallel 
computational geometry. 
We also present a method to maintain recursive metadata despite having only constant space per processor.
For simplicity, we assume that points are in general position and that no two points share the same $x$-coordinate. These assumptions may be lifted via standard techniques~\cite{mehlhorn2006reliable}. 

Our algorithm is based on a classic parallel convex hull 
algorithm~\cite{goodrich1987efficient,aggarwal1988parallel}. 
At a high-level, the algorithm applies parallel divide-and-conquer. The input to the algorithm is a set of $n$ points stored in an array ordered by their $x$ coordinates.
The algorithm first splits the input into $\sqrt n$ equal-sized subproblems. 
We recurse and return with $\sqrt n$ intermediate upper hulls that must be combined into a single upper hull. 
We use the double binary search method of Overmars and van Leeuwen~\cite{overmars1981maintenance} to compute upper tangents across all pairs of hulls and use this information to determine what portions of each intermediate hull should be discarded. A final prefix operation concatenates the surviving portions of each intermediate hull. 

The original algorithm uses extra space to maintain each intermediate hull such that they can be binary-searched, split, and joined in parallel. For example, they could be represented as binary search trees ordered by $x$-coordinate. 
We cannot afford this extra $O(n)$ space. 
Instead, we maintain upper hulls of each subproblem in-place by rearranging points in the input array. 

\begin{figure}
    \centering
    \includegraphics[width=0.9\linewidth]{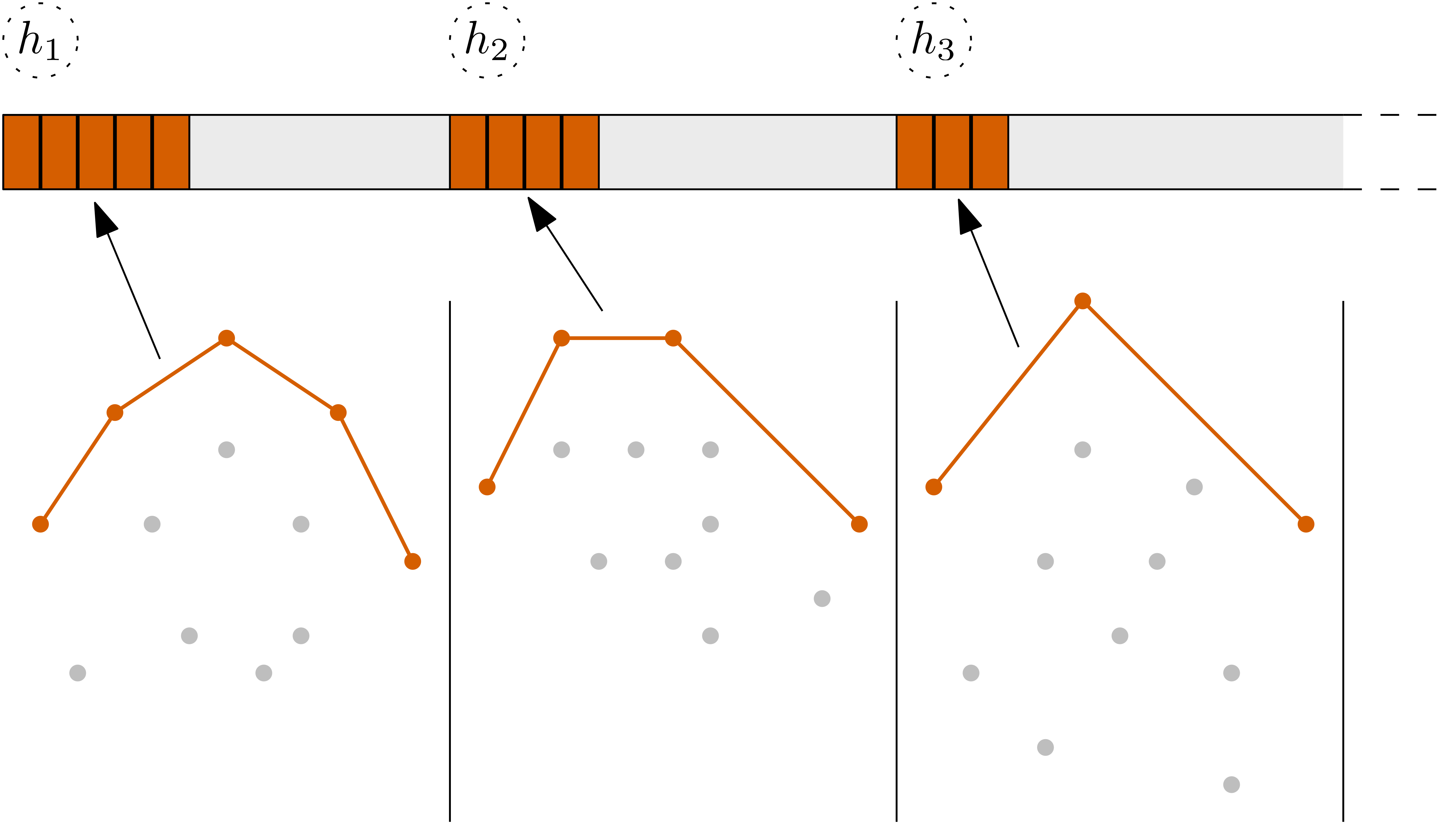}
    \vspace*{-6pt}
    \caption{The combine step for our in-place convex hull algorithm. 
Intermediate hulls are represented in-place with hull points packed the left of their subarray. 
}
    \label{fig:hull}
\end{figure}

For example, say that subproblem $S_i$ has an upper hull of $h_i$ points. Let us assume that, when $S_i$ returns to its parent subproblem, the first $h_i$ entries in $S_i$'s section of the array represent its upper-hull points ordered by their $x$-coordinate (the remaining $|S_i| - h_i$ points may be in arbitrary order). 
Furthermore, let us assume that the first processor assigned to $S_i$ has $h_i$ stored in its private stash. See \Cref{fig:hull}. Assuming this is true for all subproblems, we will show that we can construct the upper hull of their parent subproblem in-place in $O(\log m)$ span and $O(m\log m)$ work, where $m = \sum_i |S_i|$. 

The first step in combining all subproblems is to assign each processor a pair of upper hulls for which to compute an upper tangent. This pairing can be computed independently by each processor by examining their index with respect to the current problem size. 
Then each processor must perform a double binary search on a pair of hulls. The first processors of each subproblem $S_i$ can use windowing to temporarily reveal $h_i$ in the first $\sqrt m$ array indices of the parent subproblem. Processors that access $S_i$'s upper hull can stash $h_i$ and use it to determine the middle element of $S_i$'s hull. Then the first processor of each subproblem can swap $h_i$ back into their private memory and we can perform concurrent double binary searches as usual on each hull's packed array representation.

After each processor determines an upper tangent, they can stash their 
index's point and write the slope of the computed tangent to their index. Processors from each subproblem $S_i$ can in parallel perform two max-finds to determine the largest-slope 
tangent lines coming from the left 
and from the right of $S_i$ via our in-place prefix algorithm. 
With this information, each intermediate hull is reduced to a (possibly empty) convex chain, 
which we must combine into a single upper hull of $h$ points which represents 
the combined subproblems. 
This can be done easily via in-place packing. 
Each processor assigns a 0 to their point if it is not part of the new upper hull and a 1 if it is. 
By properties of packing, our packed array preserves the ordering of the hull points. 
The first processor stashes the number of hull points, $h$, and we return. 

\subsubsection{Recursive metadata load-sharing}

The last issue is how to maintain metadata about recursion such that processors can efficiently return to parent problems. Naively, we can have each processor maintain information about each subproblem they are a part of, but this blows up space usage to $\alpha\log\log n$ per processor for some constant $\alpha$, where $\alpha\log\log n$ is the number of levels of recursion. Instead, we can spread the load across several processors. 

\begin{figure}
    \centering
    \includegraphics[width=\linewidth]{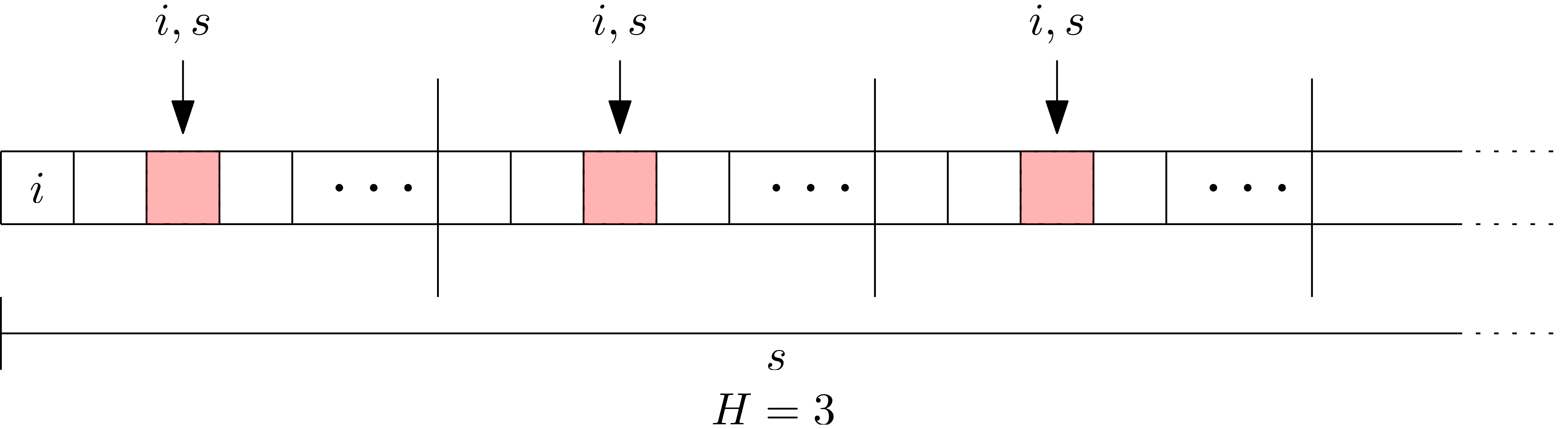}
    \caption{An example of recursive metadata load sharing. 
    We are at a parent problem at recursive depth $H=3$ about to recurse into several child subproblems. 
    Metadata (starting index and problem size) for the 
    parent subproblem is stored at index $3$ of each child subproblem. 
    By definition of the recursion and by the fact that all subproblems are size $> \alpha\log\log n$, 
    each index of our input array, and so each processor, 
    is associated with
    exactly one metadata storage.
    }
    \label{fig:recursive-metadata}
\end{figure}

First, we note that all subproblems of size $\leq \alpha\log\log n$ can be computed via in-place Graham scan in $O(\log\log n)$ span and $O(n)$ 
work since the points are already sorted~\cite{bronnimann2002place}. Thus, doing so does not increase the asymptotic complexity of the algorithm. Let $H$ be our current level of recursion, starting with $H=1$. Say that we are at some parent subproblem of size $m$ and depth $H$ and about to recurse into its children. Just before recursing, have the $H$-th processor of each child subproblem store two values: the position of the first processor of the parent problem and $m$. 
See Figure~\ref{fig:recursive-metadata}.

After computing the upper hulls of each child subproblem and just before returning to the parent problem, 
have the $H$-th processor in each child subproblem window the position and size information of the parent problem. 
Each processor can maintain their current depth $(H+1)$ with constant storage, so it takes $O(1)$ span and $O(m)$ total work across all the child subproblems to 
broadcast the parent problem's position and size information.



From the observation above, we can assume each subproblem above the base case is size $> \alpha\log\log n$. 
Then by construction of the original algorithm and our load-sharing procedure, each index of the array is associated with at most a single metadata storage.
As a result, each processor stores at most a constant amount of recursive metadata.

The cost of the rest of the combine step in the parent subproblem is dominated by the  binary searches, which take $O(\log m)$ span and $O(m\log m)$ work.
These time bounds match the original algorithm's  bounds asymptotically. We use their runtime analysis to conclude that the  span of the algorithm is $O(\log n)$ and its  work is $O(n\log n)$~\cite{goodrich1987efficient,aggarwal1988parallel}.

\begin{theorem}
There exists an algorithm to compute the 2D convex hull of $n$ presorted points in the plane in $O(\log n)$ span and $O(n\log n)$ work for the CREW PRAM when each of $n$ processors has $O(1)$ private memory. 
\end{theorem}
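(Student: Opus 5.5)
We prove the theorem by induction on the recursion of the $\sqrt{n}$-way divide-and-conquer described above. The invariant is the one fixed in the combine-step discussion. When a subproblem $S$ of size $m$ returns, the first $h$ entries of its subarray are its upper-hull points in increasing $x$-order. Its first processor holds $h$ in private memory. Each processor holds only $O(1)$ words: its own stashed point, its current depth, and at most one metadata record from the load-sharing scheme. Nothing is stored outside the input array.

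The base case covers subproblems of size at most $\alpha\log\log n$. For these we run the sequential in-place Graham scan of~\cite{bronnimann2002place}, executed by the first processor of each block. The input is presorted, so this takes $O(\log\log n)$ span and $O(n)$ total work. It also establishes the invariant, because the scan leaves the hull points packed at the front in $x$-order.

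For the inductive step we consider a parent subproblem of size $m$ with children $S_1,\dots,S_{\sqrt m}$ that satisfy the invariant. The combine proceeds in four stages.
\begin{enumerate}
\item The first processors of the children window their values $h_i$ into the first $\sqrt m$ cells of the parent. Processors read these values concurrently, which the CREW model permits, stash them, and then the values are swapped back.
\item The $\binom{\sqrt m}{2} < m$ pairs of hulls are assigned to processors by index arithmetic. Each processor runs an Overmars--van Leeuwen double binary search on the packed arrays. This uses concurrent reads and costs $O(\log m)$ span and $O(m\log m)$ work.
\item Each processor stashes its point and writes its tangent slope into its own cell. Per-child max-finds, done with the in-place prefix theorem, identify the surviving chain of each hull.
\item The in-place packing theorem, which is stable, compacts the surviving points in $x$-order. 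The first processor stashes the new $h$. The metadata for returning to the grandparent is restored by windowing from the $H$-th processor of each child and broadcast in $O(1)$ span.
\end{enumerate}
Each stage uses $O(1)$ private words per processor and no extra shared memory. So the combine costs $O(\log m)$ span and $O(m\log m)$ work.

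The span and work bounds then follow from the recurrences
\[
T(n)=T(\sqrt n)+O(\log n)=O(\log n), \qquad W(n)=\sqrt n\,W(\sqrt n)+O(n\log n)=O(n\log n).
\]
For the work recurrence, level $k$ contributes $O(n\log n/2^k)$, a geometric series. The base-case terms contribute $O(\log\log n)$ span and $O(n)$ work.

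The main obstacle is confirming that private memory stays $O(1)$ despite the $\Theta(\log\log n)$ levels of recursion. The argument is that the $H$-th processor of every child at depth $H$ is a distinct array index across all depths. This holds because every non-base subproblem has size greater than $\alpha\log\log n$, which exceeds the maximum depth. So each processor stores at most one record, as shown in \Cref{fig:recursive-metadata}. A second point to check is that windowing never overwrites an unstashed value. Every temporary write goes into a cell whose point is already stashed by its owning processor and is swapped back before that cell is read as a point again.
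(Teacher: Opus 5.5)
Your proposal is correct and follows the paper's route. It uses the same pieces:
\begin{itemize}
\item the $\sqrt{n}$-way divide-and-conquer, with each child's hull packed at the front of its subarray and $h_i$ held in its first processor's stash;
\item windowed $h_i$ and concurrent Overmars--van Leeuwen tangent searches;
\item in-place max-finds and stable packing;
\item the $H$-th-processor metadata load-sharing;
\item an in-place Graham scan for subproblems of size at most $\alpha\log\log n$.
\end{itemize}
Your explicit recurrences just spell out the bounds the paper takes from the original algorithm's analysis.

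One small labeling slip in stage 4: the record in each child's designated cell describes the parent being combined, and it should be windowed before stage 1 so processors know $m$ and the parent's start. The record needed to return to the grandparent is in the parent's own designated cell. This does not affect correctness or the bounds.
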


\subsection{SampleSort}

In this section, we apply an analysis of a new variant of parallel balls in bins to prove the following.

\begin{theorem}\label{thm:sample-sort}
We can implement an algorithm that sorts $n$ elements in $O(\log n)$ span with high probability
and $O(n\log n)$ expected work for the arbitrary CRCW PRAM when each of $n$ processors has $O(1)$ private memory.
\end{theorem}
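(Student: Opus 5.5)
We will adapt the classic $O(\log n)$-span randomized CRCW sample sort of Reischuk / Rajasekaran--Reif style, replacing each auxiliary structure with stashing, windowing, and the in-place primitives already established (prefix, packing). The high-level structure is as follows. We draw a random sample of $s = n^{\epsilon}$ (say $\sqrt n$) splitters. We sort them in $O(\log n)$ span using $n$ processors, for example by an all-pairs rank computation in which $s^2 \le n$ processors each compare one pair, followed by in-place prefix sums to obtain ranks. We place the sorted splitters in a prefix of the array, swapping out the displaced elements into the private stashes of the corresponding processors.

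Each processor then binary searches its own stashed element among the splitters in $O(\log n)$ span, using concurrent reads. This determines its bucket index. To route elements to buckets without an auxiliary $n$-sized array, we use a parallel balls-in-bins step. Each bucket $j$ is allotted a contiguous region whose size is an upper bound $c\cdot (n/s)\log n$ on its load, or better $c\, n/s$ for all but a few overflow buckets. The bucket sizes themselves must first be estimated in place, and a simpler route is to compute exact bucket counts. For that, we pack elements by bucket label through $O(1)$ rounds: elements stash themselves, write their bucket label into their own cell, and we run a stable in-place prefix/packing per group using a radix-style scheme. Alternatively, and this is what the paper's ``new variant of balls in bins'' suggests, each element throws itself into a random slot inside its bucket's region. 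Concurrent writes are resolved arbitrarily, losers retry, and the analysis shows that $O(\log n)$ rounds, or geometrically shrinking active sets, suffice w.h.p. with $O(n)$ total work in expectation. The regions are obtained from approximate counts: we sample-count per bucket, then take an in-place prefix sum over the $s$ counts windowed into the first $s$ cells. Regions are then compacted with in-place packing, all with $O(1)$ private memory, since each processor holds only its stashed element, its bucket id, and a slot index.

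Finally, we recurse on the buckets, each of size $O(n^{1-\epsilon})$ w.h.p. by a Chernoff bound. The span recurrence is $T(n)=T(n^{1-\epsilon})+O(\log n)=O(\log n)$, because the per-level costs form a geometric series. Work is $O(n\log n)$ in expectation, as the cost per level is $O(n\log n)$ from the binary searches and there are only $O(\log\log n)$ levels. That gives an extra $\log\log$ factor, so to stay at $O(n\log n)$ we switch to a different base case once the size is small: buckets of size $\mathrm{polylog}(n)$ are sorted with an in-place $O(\log n)$-span routine, or we charge the binary-search cost more carefully, since the search depth shrinks as $\log(n^{(1-\epsilon)^k})$, which is itself geometric. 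Recursion metadata is handled by the load-sharing trick from the convex hull section.

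The main obstacle is the balls-in-bins routing step. We must show that the random-slot placement with arbitrary-CRCW conflict resolution fills bucket regions, sized with only constant slack, in $O(\log n)$ span w.h.p., and then that in-place packing removes the slack without extra memory. The first thing to try is a two-phase argument. In the first phase, with slack factor 2, each round a constant fraction of the remaining balls succeed, so after $O(\log\log n)$ rounds at most $n/\log n$ balls remain. In the second phase, these leftovers are placed deterministically via an in-place prefix over empty slots.
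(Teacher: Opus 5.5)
There is a genuine gap exactly where you place the main obstacle: the routing step is never made in-place. Every variant you sketch either needs more than $n$ shared cells or needs the distribution primitive it is meant to build.

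\begin{itemize}
\item Regions of $c(n/s)\log n$ or $c\,n/s$ cells per bucket, or regions with slack factor $2$, total $cn\log n$, $cn$ or $2n$ cells. The array has only $n$ cells, and the model forbids allocating more.
\item Regions sized exactly leave no slack. Once a region is nearly full, a retrying element hits one of its few empty cells only with probability $O(s/n)$ per round.
\item Regions sized from sampled counts can be smaller than the true bucket, and nothing in your outline handles the overflow.
\item Your deterministic clean-up (a prefix over empty slots) needs each leftover element's rank among the leftovers of its own bucket. That is a stable $s$-way partition by bucket label, i.e.\ the original distribution problem. Binary packing gives it only through $\Theta(\log s)$ radix passes, costing $O(\log^2 n)$ span. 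Computing exact bucket counts runs into the same circularity.
\end{itemize}

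The missing idea is to let the $O(1)$ private memories supply the slack, so no counts are needed in advance.
\begin{itemize}
\item The paper cuts the array into $O(n^{1/3})$ equal regions of $b\approx n^{2/3}$ cells, one per bucket.
\item Each element repeatedly writes itself to a random cell of its bucket's region. After each round, the processor owning a cell copies whatever landed there into its stash, up to $3c$ items, so a region can absorb $3c\,n^{2/3}$ elements.
\item This capacity suffices because of oversampling: draw $n^{1/3}\log n$ samples and keep $O(n^{1/3})$ evenly spaced ones. Then every bucket has at most $c\,n^{2/3}$ elements w.h.p.
\item With your plain $\sqrt n$ random splitters the largest bucket is $\Theta(\sqrt n\log n)$, which an $O(1)$ stash per cell cannot hold.
\end{itemize}

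Because a cell accepts at most one element per round, the process is a congested balls-into-bins game, analysed in two phases.
\begin{itemize}
\item In the first $3c$ rounds no cell is full, so every hit cell accepts. At least $b/3$ elements are placed per round w.h.p., so at most $b/2$ remain afterwards.
\item From then on at most $b/3$ cells are ever full. Each remaining element lands alone in a non-full cell with probability at least $1/3$ per round, so $O(\log n)$ further rounds finish w.h.p.
\end{itemize}
A prefix sum over the stash counts then assigns final positions with each bucket contiguous.

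The rest of your outline is consistent with the paper: the all-pairs sample sort, the binary searches, the recursion, and the geometric charging of search costs.
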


SampleSort is a generalization of randomized quicksort in which we sample more than one pivot~\cite{frazer1970samplesort,blelloch1998experimental,axtmann2022engineering}. We can sort the pivots and then partition each remaining element into ``buckets'' based on where they lie in between the pivots. We then recurse into each of these buckets and repeat. Our in-place algorithm is based on the non-in-place sample sort algorithm of Blelloch, Fineman, Gu, and Sun~\cite{blelloch2020optimal} 
(see also, Exercise 9.30 of~\cite{jaja1992parallel}). 
At each recursive step of size $n$, their algorithm randomly samples $n^{1/3}\log n$ elements, sorts them, and sub-samples $O(n^{1/3})$ of them to define the buckets. 

Sorting the sampled elements can be done in-place using the IJ swap in $O(\log n)$ span and $o(n)$ work since storing each pairwise comparison takes $O((n^{1/3}\log n)^2) \subset o(n)$ space. Thus, we can free up space in the input array by stashing and use this space to maintain and process all pairwise comparisons. 
In addition, performing parallel binary search to determine which elements belong to which buckets can be done in $O(\log n)$ span and $O(n\log n)$ work by packing the $O(n^{1/3})$ pivots at the beginning of the array in sorted order. 

Blelloch {\it et al.}~\cite{blelloch2020optimal} prove the following helpful lemma, bounding the maximum size of each bucket. 

\begin{lemma}\label{lem:high-prob-bucket}
There exists a constant $c$ such that the number of elements falling into any bucket is no more than $cn^{2/3}$ with high probability.
\end{lemma}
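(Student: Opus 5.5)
The plan is a standard sampling argument with a union bound over a fixed family of intervals in sorted order. Fix the $n$ input elements in sorted order $x_1 < \cdots < x_n$; ties are broken by index, so the elements are treated as distinct. The sample $S$ has size $s = n^{1/3}\log n$. After sorting $S$, we keep every $k$-th sample element as a pivot, with $k = \Theta(\log n)$, which gives $O(n^{1/3})$ pivots. A bucket is the set of input elements lying strictly between two consecutive pivots, together with the two end buckets. The key observation is that if some bucket contains more than $cn^{2/3}$ elements, then some contiguous run of $cn^{2/3}$ consecutive elements in sorted order contains fewer than $k$ sampled elements. This is because two consecutive pivots are separated by exactly $k$ samples in sorted order, and the end buckets by fewer than $k$. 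So it suffices to bound the probability that some window $x_j,\dots,x_{j+L-1}$ with $L = cn^{2/3}$ contains fewer than $k$ samples.

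For a fixed window, the number of samples $X$ falling in it is a sum of indicators. If sampling is with replacement, it is exactly $\mathrm{Bin}(s, L/n)$. If sampling is without replacement, it is hypergeometric, and Hoeffding's observation gives the same Chernoff tail bounds. Its mean is
\[
\mu = s\cdot L/n = c\, n^{1/3}\log n \cdot n^{-1/3} = c\log n.
\]
We choose $k = \lceil (c/2)\log n \rceil$, or more generally take $k = \kappa \log n$ and pick $c = 2\kappa$ at least some large constant. The lower-tail Chernoff bound then gives
\[
\Pr[X < \mu/2] \le e^{-\mu/8} = n^{-c/8}.
\]
Taking $c$ large, for instance $c \ge 8(d+1)$, makes this at most $n^{-(d+1)}$.

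There are at most $n$ windows, one per starting position $j$. A union bound gives failure probability at most $n^{-d}$, which is high probability by the paper's definition. If we insist on the number of pivots being exactly $\Theta(n^{1/3})$, we simply set $k = s/n^{1/3} = \log n$ and choose $c$ accordingly, e.g.\ $c=16$ suffices for $n^{-1}$ and larger $c$ for higher powers. The two end buckets are handled by the same window argument.

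The only real subtlety is the reduction from ``large bucket'' to ``sparse window''. We must check that a bucket exceeding $L$ elements really contains a full window of $L$ consecutive sorted elements with fewer than $k$ samples, accounting for the fact that the pivots themselves lie at the boundaries. We must also check that the dependence among windows is harmless, which the union bound makes so. Everything else is a routine Chernoff computation, and the constant $c$ is chosen last to match the desired exponent in the high-probability bound.
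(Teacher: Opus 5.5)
Your proof is correct. The paper gives no argument of its own; it simply cites Lemma~4.1 of Blelloch et al. What you have written is a self-contained version of the standard sampling argument behind that citation:
\begin{itemize}
\item an oversized bucket forces a run of $cn^{2/3}$ consecutive elements, in the tie-broken sorted order, that contains fewer than $k=\Theta(\log n)$ samples;
\item each such run receives $c\log n$ samples in expectation, so a lower-tail Chernoff bound makes this event at most $n^{-c/8}$ likely (the bound transfers to sampling without replacement via Hoeffding);
\item a union bound over at most $n$ starting positions finishes the proof.
\end{itemize}
The boundary check you flag goes through. If a bucket includes its left pivot and has more than $L$ elements, the window of length $L$ starting just after that pivot contains at most $k-1$ samples. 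The end buckets likewise contain fewer than $k$ samples.

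Compared with the bare citation, your version makes the parameter dependence explicit. For pivot spacing $\kappa\log n$ and failure probability $n^{-d}$, any $c\ge\max\{2\kappa,8(d+1)\}$ works, because $\Pr[X<k]$ is monotone in $k$. That is the right way to read your ``$c=2\kappa$''; setting $c$ equal to $2\kappa$ exactly would not control the exponent.

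Your version also exposes two assumptions the citation leaves implicit:
\begin{itemize}
\item the algorithm's subsampling takes every $\Theta(\log n)$-th sorted sample, not merely some $O(n^{1/3})$ of them;
\item the algorithm's bucket assignment uses the same tie-broken order as your analysis (for example, comparing key--index pairs). Otherwise many copies of a pivot's key could all land in a single bucket.
\end{itemize}
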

\begin{proof}
See Lemma 4.1 of~\cite{blelloch2020optimal}.
\end{proof}

The authors allocate arrays of size $cn^{2/3}$ to serve as the buckets between each pivot. They can then use a randomized algorithm to write each bucket's elements to the correct place. Since the space allocated to each bucket is at least as large as the number of elements in that bucket, it is simple to show that this is efficient. However, this requires $O(n)$ additional words of shared memory. As a result, we apply a different strategy. 

\subsubsection{Distributing Elements In-Place}

\begin{figure}
    \centering
    \includegraphics[width=.9\linewidth]{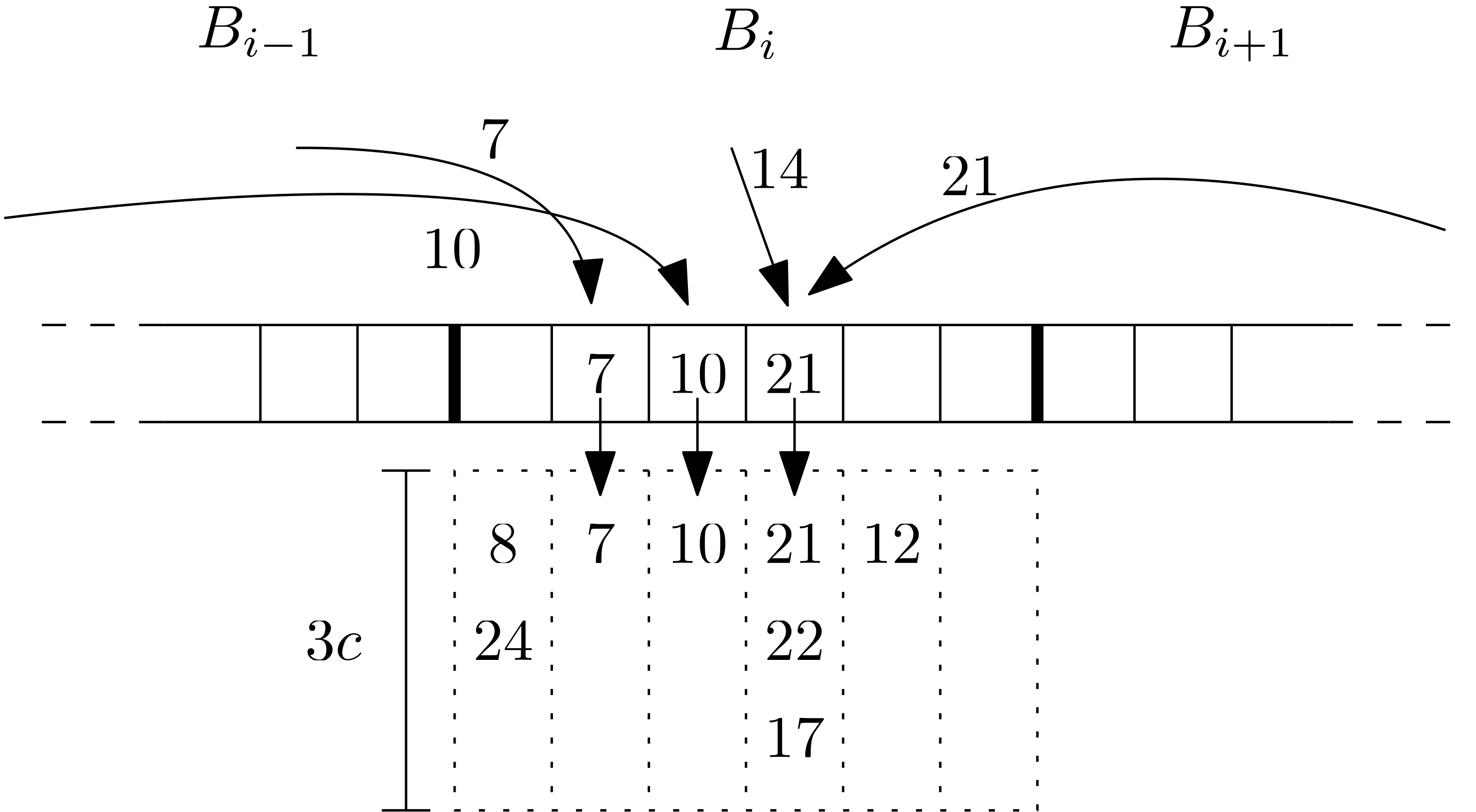}
    \caption{Our in-place distribution algorithm for sample sort. 
    At each step, every element that belongs in $B_i$ tries to write itself to a random location in $B_i$. 
    Write conflicts (e.g., with 14 and 21) are resolved arbitrarily. 
    At the end of the round, each entry copies the element that lands there into its stash. It does so until it reaches its stash size of $3c$. 
    }
    \label{fig:sample-sort}
\end{figure}

Our in-place algorithm to distribute elements into buckets begins by having each processor stash the element originally at their index. We then partition the input array evenly into groups of size roughly $n^{2/3}$, each of which represents a bucket. However, as we noted above, each bucket may contain up to $cn^{2/3}$ elements. Thus the array itself may not have enough space to contain the bucket's elements. We can solve this by using each processor's $O(1)$-sized stash to augment the capacity of each bucket. 

Our method proceeds in rounds. See \Cref{fig:sample-sort} for a visual of the following description. Each processor takes their initially stashed element $e$ and attempts to write it to a random location $j$ in the section of the array corresponding to the bucket it belongs to. Write conflicts are dealt with arbitrarily. 
After the write step, each processor $p_i$ reads the element at their own index $i$. They copy $A[i]$ into their stash so long as they have less than $3c$ elements already stashed. If they are at capacity, they write a null value to $A[i]$ to indicate that they did not stash anything.
Processors then read $j$ to determine if they have successfully written their element $e$. 

By augmenting the capacity of each bucket index with $3c$ elements of private storage, each bucket can store $3cn^{2/3}$ elements. By \Cref{lem:high-prob-bucket}, this is enough space to store all elements w.h.p. in $n$. 

After the distribution process completes, each processor counts the number of elements they have stashed. Similar to parallel partition, we can perform a prefix sum on these counts. The information can be used to assign each element a unique position in the array such that all elements in the same bucket are contiguous. Each element writes to those locations in the input array and then we recurse into each bucket. These final steps take $O(\log n)$ span and $O(n)$ work.

Each round of the distribution process takes $O(1)$ span and $O(n)$ work. 
It remains to show that our distribution algorithm terminates in $O(\log n)$ rounds w.h.p.

\subsubsection{Congested Parallel Balls into Bins}

Our algorithm can be mapped to a parallel variant of balls into bins, where the balls are the elements that need to be distributed and the bins are the array indices. Several works have already studied variants of parallel balls into bins, often in the context of load balancing tasks on machines. See, e.g.,~\cite{stemann1996parallel,berenbrink2016self,adler1995parallel,lenzen2011tight,berenbrink1997allocating,berenbrink2012multiple,adler1998analyzing,lenzen2019parallel}. 
However, it appears that our setting is unique in that all remaining balls are thrown at once yet bins accept at most one ball each round, 
even if their remaining capacity is more than one. 

We call this variant the \emph{congested parallel balls into bins} problem. 
Formally, we have $m$ balls and $b$ bins.
Each bin has a threshold $t \geq m/b$ and a \emph{congestion factor} $g$. 
Each round, we throw all remaining balls uniformly into the $n$ bins. 
Bins try to accept as many balls that land in it, 
subject to the following constraints. 
Bins can accept no more than $g$ balls each round, 
and bins can accept no more than $t$ balls in total. 
Balls that are not accepted by a bin must be thrown again in the next round. 
For a given instantiation of $m$, $b$, $t$, and $g$, 
we are interested in knowing how many rounds it takes until all balls are accepted by a bin w.h.p. in the number of bins $b$.

Let us first focus on a single bucket. The matching instance of the congested parallel balls into bins problem has $b=n^{2/3}$ be the number of bins available, 
$m \leq cb$, $t = 3c$, and $g=1$. We will show that after $O(\log b)$ rounds, all $m$ balls will be allocated w.h.p. in $b$. 

Our approach defines two (implicit) phases of the distribution process. 
The first phase concerns the first $3c$ rounds of the process. 
Since each bin has capacity $3c$, every 
bin in the first $3c$ rounds must accept a ball if it receives one. 
Due to this property, 
we show that a large majority of the balls are placed in this phase. 
The remaining rounds form the second phase, 
in which all straggler balls are placed. 
We show 
that the second phase takes an additional 
$O(\log b)$ rounds w.h.p. in $b$. 

\begin{lemma}\label{lem:first-rounds}
After $3c$ iterations, there will be at most $b/2$ balls remaining w.h.p. in $b$, the number of bins. 
\end{lemma}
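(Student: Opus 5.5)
In the first $3c$ rounds no bin can reach its capacity $t=3c$, because a bin accepts at most $g=1$ ball per round. So during this phase a bin accepts a ball exactly when at least one ball lands in it. The process is then equivalent to this: in each round, the number of balls placed equals the number of nonempty bins. I will show that in every round where many balls remain, a constant fraction of them is placed w.h.p.; iterating this for $3c$ rounds gives the claim.

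The key step is a per-round bound. Suppose $k$ balls remain at the start of a round, with $b/2 < k \le m \le cb$. The expected number of nonempty bins is
\[ E[X] = b\bigl(1-(1-1/b)^k\bigr) \ge b\bigl(1-e^{-k/b}\bigr). \]
Since $k/b \le c$ and $1-e^{-x} \ge (1-e^{-c})x/c$ on $[0,c]$, this gives $E[X] \ge \delta k$ with $\delta=(1-e^{-c})/c$. The indicator variables of bins being nonempty are negatively associated; alternatively, $X$ is a $1$-Lipschitz function of the $k$ independent ball choices. Either way, Chernoff or McDiarmid gives $X \ge \delta k/2$ except with probability $\exp(-\Omega(\delta^2 k^2/k)) = \exp(-\Omega(b))$, since $k > b/2$. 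Hence, w.h.p. in $b$, each round multiplies the number of remaining balls by at most $(1-\delta/2)$, as long as more than $b/2$ balls remain. A union bound over the $3c$ rounds keeps the failure probability at $3c\,e^{-\Omega(b)}$.

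After $3c$ rounds the remaining count is at most $\max\{b/2,\; cb(1-\delta/2)^{3c}\}$, and I must show the second term is at most $b/2$. This is the main obstacle, because the constants $3c$ and $\delta$ must mesh. The estimate $(1-\delta/2)^{3c}\le e^{-3(1-e^{-c})/2}\approx e^{-1.5}$ is too weak when $c$ is large. So I will do a sharper two-regime analysis instead of a uniform fraction:

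\begin{itemize}
\item While $k \ge b$, the number of empty bins concentrates around $b e^{-k/b}$. Each round therefore places about $b(1-e^{-k/b}) \ge (1-e^{-1})b$ balls, with deviation $o(b)$ w.h.p. Starting from $cb$ balls, it takes at most $c/(1-e^{-1}) + O(1)$ rounds, well under $2c$, to drop below $b$.
\item Once $b/2 < k < b$, each round places at least about $(1-e^{-1/2})\cdot 2\cdot (k/2)\ge 0.39k$ balls. So a constant number of further rounds, at most $c$ for $c\ge 2$, brings the count below $b/2$.
\end{itemize}

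If needed, I can assume $c \ge 2$ without loss of generality, since the bound in Lemma~\ref{lem:high-prob-bucket} is only an upper bound. Together the two regimes use at most $3c$ rounds, proving the lemma.
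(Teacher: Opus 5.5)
Your proof is correct, but it takes a more roundabout route than the paper's. Both arguments rest on the same facts. During the first $3c$ rounds no bin can saturate, so the number of balls placed in a round equals the number of nonempty bins. That number concentrates, and a union bound over $O(1)$ rounds finishes.

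The paper, however, uses a single additive per-round bound valid throughout the phase. Whenever more than $b/2$ balls remain, each bin is missed with probability at most $(1-1/b)^{b/2}\le e^{-1/2}$. So w.h.p.\ at most $2b/3$ bins are empty and at least $b/3$ balls are placed. Since $3c\cdot b/3=cb\ge m$, the count must drop to at most $b/2$ within the phase. The choice of $3c$ rounds makes the constants mesh exactly, with no case analysis and no condition on $c$.

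The obstacle you ran into comes from the multiplicative framing in your first paragraph. Your own regime-2 computation already gives $E[X]\ge b(1-e^{-1/2})>b/3$ for every $k>b/2$, which yields the paper's one-line argument directly.

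Your two-regime version also works, at two costs. First, the round count should be made explicit: for example, $\lfloor (c-1)/(1-e^{-1}-\epsilon)\rfloor+1$ rounds to get below $b$, then at most two more to get below $b/2$, rather than ``$c/(1-e^{-1})+O(1)$, well under $2c$.'' Second, you need the extra assumption $c\ge 2$, which is harmless since $c$ is a free design constant.

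In exchange, your version gives sharper per-round constants. Unlike the paper, it also explicitly justifies the concentration step for the dependent empty-bin indicators, via negative association or McDiarmid's inequality.
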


\begin{proof}
We first observe that, since $g=1$ and $t=3c$, no bin can exceed its threshold for the first $3c$ rounds. As a result, any bin that receives one or more balls in the first $3c$ rounds must accept a ball. 

Let  $m' \leq m$ be the number of balls thrown at the beginning of a given round. Each ball lands in a bin with probability $1/b$. 
The chance a bin is not hit by any ball is $(1 - 1/b)^{m'}$. 
Assume that $m' > b/2$ (otherwise, we are done). Then $(1 - 1/b)^{m'} \leq (1 - 1/b)^{b/2} \leq 1/\sqrt e$. 
As a result, we expect at most $b/\sqrt e$ bins to not be hit by a single ball each round. 

Let $X$ be a random variable representing the number of bins not hit this iteration.
Applying a Chernoff bound, we see that 
$$\Pr[X \geq (1+\delta)\times b/\sqrt e] \leq e^{-b\delta^2/4\sqrt e}.$$
This holds with exponential probability\footnote{
An event succeeds with exponential probability in $n$ if it succeeds with probability $1 - c_1^{-c_2n}$, for some constants $c_1 > 1$, $c_2$.
} 
in $b$ for any $\delta > 0$. For concreteness, we set $\delta = 2\sqrt e/3 - 1$. This implies that no more than $2/3$ of the bins will fail to be hit with exponential probability in $b$. 

As a consequence, when $m' > b/2$, at least $b/3$ bins will be hit with one or more balls with high probability. Since no bin hits their threshold until round $3c$ at the earliest, this means that bins accept at least $b/3$ balls in a given round w.h.p. 
By union bound over the first $3c$ rounds, 
this holds with probability $1 - 3ce^{-O(b)}$ over all rounds, which is still exponential in $b$. $m - 3c\times b/3 \leq cb-3c\times b/3 < b/2$, so we have the desired result.
\end{proof}

\begin{lemma}\label{lem:remaining-balls}
When there are $b/2$ balls remaining, it takes $O(\log b)$ rounds until all balls are accepted by a bin w.h.p. in $b$.
\end{lemma}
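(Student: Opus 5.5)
The plan is to show that, once at most $b/2$ balls remain, the number of remaining balls shrinks geometrically each round until it is $O(\log b)$ or so, and then to finish the stragglers with a separate argument. The key structural fact is capacity: at most $m \le cb$ balls exist in total and each bin has threshold $t = 3c$, so at most $b/3$ bins can ever be \emph{full}. Hence in every round at least $2b/3$ bins are non-full, and any such bin accepts a ball if it is hit (since $g=1$).

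\textbf{Step 1 (per-ball success probability).} Fix a round with $m' \le b/2$ remaining balls and fix one ball $x$. It lands in a non-full bin with probability at least $2/3$. Conditioned on that, it is accepted if no other ball lands in the same bin. This holds with probability at least $(1-1/b)^{m'-1} \ge (1-1/b)^{b/2} \ge e^{-1/2}\cdot(1-o(1))$. Thus each remaining ball is accepted with probability at least a constant $q = \tfrac{2}{3}e^{-1/2} - o(1) > 1/3$, independently of the history. Only the set of full bins depends on the past, and the bound holds for any such set.

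\textbf{Step 2 (per-ball tail bound, avoiding concentration issues).} Rather than tracking the count of remaining balls with Chernoff bounds (which fail once the count is small), I would argue ball-by-ball. In each round, regardless of the past, a given ball survives with probability at most $1-q$. So the probability that a specific ball is still unaccepted after $R$ rounds is at most $(1-q)^R$. Formally, this multiplies conditional survival probabilities round by round: the bound in Step 1 holds conditioned on any history with at most $b/2$ remaining balls, and the remaining count only decreases. Choosing $R = C\log b$ with $C$ large makes this at most $b^{-(c'+1)}$. A union bound over the at most $b/2$ balls then shows all balls are accepted within $O(\log b)$ rounds with probability $1 - b^{-c'}$.

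\textbf{Main obstacle.} The delicate point is justifying Step 1 uniformly over histories. Specifically, the acceptance event for a ball depends on where the other remaining balls land, and the set of full bins is random. I would handle this by conditioning on the entire state at the start of the round (remaining balls and bin loads). Within the round, throws are independent uniform. The probability computation then only uses $m' \le b/2$ and "at least $2b/3$ bins have spare capacity," both of which hold deterministically. One should also note that arbitrary conflict resolution only matters when two or more balls collide, and the bound above only used the no-collision event.
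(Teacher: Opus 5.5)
Your proof is correct and follows the paper's argument: at most $b/3$ bins can ever be full, so each remaining ball lands alone in a non-full bin with constant probability per round. It therefore survives $O(\log b)$ rounds with probability $b^{-\Omega(1)}$, and a union bound over the at most $b/2$ balls finishes the proof.

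The differences from the paper are cosmetic. You bound the no-collision probability by $(1-1/b)^{m'-1}$ where the paper uses a union bound ($<1/2$). You also make explicit the conditioning on the history that the paper's ``biased coin flips'' step leaves implicit. Your opening plan of geometric shrinkage of the remaining count is never used and can be dropped.
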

\begin{proof}
We will first show that each ball has a constant probability of being accepted by a bin each round. 
Because our threshold is set to $3c$ and $m\leq bc$, no more than $b/3$ bins can ever be at their threshold. 
For simplicity, we assume that exactly $b/3$ bins are at their threshold and cannot accept any new balls. 

Since the balls are thrown uniformly, each lands in a bin with space available with probability at least $2/3$. Let us assume that a ball must land by itself in an available bin to be accepted.\footnote
{This is a pessimistic assumption. 
In reality, the ball might get lucky and get picked arbitrarily if it lands with one or more other balls.} 
If a pair of balls is thrown uniformly at $b$ bins, they collide with probability $1/b$. 
If we fix one of the balls, the probability that it collides with another ball is at most $(b/2 - 1)/b < 1/2$. Thus, each ball lands by itself in an available bin with probability at least $1/3$.

We can model this event with biased coin flips. Assume we flip a coin with success probability $1/3$ $O(\log b)$ times. The probability that no heads occur is $(2/3)^{O(\log b)} < 1/b^{-\beta}$ for some constant $\beta > 1$ dependent on the constant in the $O$-notation. Then a given ball will be accepted in $O(\log b)$ rounds with probability $1 - b^{-\beta}$. 

By union bound over all $b/2$ balls, the probability of failure is increased to at most $b^{-(\beta - 1)} / 2$. 
Adjusting the constants in the $O$-notation, we can ensure that $\beta > 2$ and retain the high-probability bound.
\end{proof}

\begin{theorem}\label{thm:sample-sort-rounds}
Our algorithm to distribute elements in-place succeeds after $O(\log n)$ rounds w.h.p. in $n$.
\end{theorem}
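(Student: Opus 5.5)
The plan is to reduce the statement to the single-bucket analysis of Lemmas~\ref{lem:first-rounds} and~\ref{lem:remaining-balls}, and then take a union bound over all buckets. The obstacle is converting "w.h.p. in $b$" into "w.h.p. in $n$". Since $b = n^{2/3}$, any failure probability of the form $b^{-\beta}$ equals $n^{-2\beta/3}$. So a high-probability bound in $b$ is also a high-probability bound in $n$, after enlarging the constant hidden in the $O(\log b)$ round count so that $2\beta/3$ exceeds any desired exponent.

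First, I will condition on the event of Lemma~\ref{lem:high-prob-bucket}: every bucket receives at most $cn^{2/3} = cb$ elements, which holds w.h.p. in $n$. This gives exactly the hypotheses $m \le cb$, $t=3c$, $g=1$ of the congested balls-into-bins instance, for each bucket separately. Buckets occupy disjoint array sections, and each element throws only into its own bucket. Hence the process within one bucket is precisely the single-bucket process, and the rounds of different buckets run simultaneously. I also need to check that the mapping is faithful. A processor accepts at most one ball per round (it reads only $A[i]$), so $g=1$. It accepts at most $3c$ in total, so $t=3c$. An arbitrary-CRCW conflict leaves exactly one written value, which is either accepted or, if the stash is full, nullified, and the losing writers retry. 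Finally, the stash is only $O(1)$, so the private-memory bound is respected.

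Second, for a fixed bucket I apply Lemma~\ref{lem:first-rounds}. After $3c = O(1)$ rounds, at most $b/2$ balls remain, with exponential probability in $b$. I then apply Lemma~\ref{lem:remaining-balls}: after a further $O(\log b) = O(\log n)$ rounds, all balls are accepted with probability $1 - b^{-\beta}$, where $\beta$ can be made as large as we like. One subtlety is that Lemma~\ref{lem:remaining-balls} assumes balls remain to be thrown while bins are partially full. Its proof only uses that at most $b/3$ bins are saturated, which follows from $m \le cb$ regardless of history, so it applies as stated.

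Third, I take a union bound over the $n^{1/3}$ buckets. The total failure probability is at most
\[
n^{1/3}\bigl(b^{-\beta} + e^{-\Omega(b)}\bigr) + n^{-c'} ,
\]
where the last term accounts for the failure of Lemma~\ref{lem:high-prob-bucket}. This equals $n^{1/3 - 2\beta/3} + o(n^{-c'})$, which is polynomially small once $\beta$ is chosen large, giving the claimed $O(\log n)$ rounds w.h.p. in $n$. The step I expect to need the most care is the faithfulness check. Lemma~\ref{lem:remaining-balls} pessimistically counts only balls landing alone in unsaturated bins, and I must make sure that no ball is silently lost. A ball whose write succeeds but lands on a full processor is nullified, and the ball's owner detects this by re-reading location $j$ and rethrows. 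Under that reading, the process dominates the model analyzed in the lemma.
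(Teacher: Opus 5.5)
Your proposal is correct and follows the paper's own argument. Like the paper, you apply Lemma~\ref{lem:first-rounds} and Lemma~\ref{lem:remaining-balls} to each bucket, use $b=n^{2/3}$ (enlarging the round constant) to turn the $b^{-\beta}$ failure bound into $n^{-2\beta/3}$, and take a union bound over the $O(n^{1/3})$ buckets. Your explicit conditioning on Lemma~\ref{lem:high-prob-bucket} and your check that the stash/null/re-read mechanism realizes $g=1$, $t=3c$ only spell out what the paper's one-line proof leaves implicit. (One harmless slip: the $n^{-c'}$ failure term from Lemma~\ref{lem:high-prob-bucket} remains as is in your final bound rather than being absorbed into $o(n^{-c'})$.)
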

\begin{proof}

Follows from \Cref{lem:first-rounds} and \Cref{lem:remaining-balls}. Since $b=n^{2/3}$, small adjustments in constant factors of the number of rounds can be made to ensure that both lemmas succeed w.h.p. in $n$, even after taking a union bound over all $O(n^{1/3})$ buckets.





\end{proof}

By \Cref{thm:sample-sort-rounds} and our initial discussion, our algorithm takes $O(\log n)$ span w.h.p. and $O(n\log n)$ expected work per recursive step of size $n$. By the runtime analysis of~\cite{blelloch2020optimal}, our algorithm satisfies \Cref{thm:sample-sort}. 
\section{The Parallel-Augmented Sweep Paradigm: Synchronous Strict PIP Scheduling}\label{sec:sublinear}

In this section, we introduce the \emph{parallel-augmented sweep} paradigm, a new technique that combines 
our in-place parallel algorithms defined above with strictly in-place sequential sweep algorithms
to produce efficient, 
in-place parallel algorithms for all processor counts $p \leq n$.
Our technique is a generalization of the \emph{decomposability} property described by Gu, Obeya, and Shun~\cite{gu2021parallel}. 

A problem is decomposable if, given $p$ processors, 
we can sweep the input in blocks of size $p$ and compute each block independently to form a correct solution. 
Our technique goes further and handles problems in which these solved subproblems are not independent and must be combined to produce a globally correct solution.

In addition, the algorithms of~\cite{gu2021parallel} that leveraged this property were Relaxed PIP, as they all required allocating $O(p)$ extra words of shared memory. 
In contrast, the algorithms we present in this section allocate no extra shared memory, consume $O(1)$ private memory per processor, and elegantly degrade to standard, strictly in-place sequential algorithms when $p$ is set to 1. 

\subsection{Parallel Prefix}

As a warm-up, we adapt a well-known algorithm for prefix sums when the number of processors $p < n$ to produce a Synchronous Strict PIP prefix algorithm. 
The algorithm works like so. 
Split $n$ into $p$ groups of size $n/p$, assign each processor to a group, and compute each group's sum iteratively.
We then compute a prefix sum on these intermediate sums in parallel using the algorithm from \Cref{sec:prefix}. 
Each processor can then return to its group and infer the prefix values of the remaining $n/p - 1$ elements using the results of the parallel prefix operation. 


\begin{theorem}
There exists an algorithm that performs parallel prefix in $O(n)$ work and $O(n/p + \log p)$ span given $O(p)$ processors for the EREW PRAM that each have $O(1)$ private space.
\end{theorem}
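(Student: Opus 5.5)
The plan follows the three-phase blocked prefix algorithm described just before the statement, and checks that each phase keeps work, span, and memory within bounds. Partition the input array $A[0..n-1]$ into $p$ contiguous blocks $B_0,\dots,B_{p-1}$ of size $\lceil n/p\rceil$ (the last block may be shorter), and assign processor $p_j$ to block $B_j$.

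\textbf{Phase 1 (local sums).} Each $p_j$ scans $B_j$ sequentially and accumulates its sum $s_j$ in a single private register, without modifying the array. This costs $O(n/p)$ span and $O(n)$ work. Blocks are disjoint, so all accesses are exclusive.

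\textbf{Phase 2 (prefix over block sums).} We need an exclusive prefix over $s_0,\dots,s_{p-1}$ using no auxiliary shared memory. Each $p_j$ stashes the first element $x_j = A[\text{start}(B_j)]$ of its block in private memory and writes $s_j$ into that cell. The $p$ block-start cells now hold the $s_j$'s in a regular stride $n/p$. We then run the in-place prefix algorithm of \Cref{sec:prefix} on these $p$ cells with $p$ processors, obtaining $O(\log p)$ span and $O(p)$ work. That algorithm only ever touches its input cells by index arithmetic, so it runs on a strided view by multiplying each index by $\lceil n/p\rceil$; equivalently, each processor translates logical index $k$ to physical index $k\lceil n/p\rceil$. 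Afterward, $p_j$ reads the exclusive prefix $P_j=\sum_{i<j}s_i$ from its cell into a private register and restores $x_j$. The prefix routine uses $O(1)$ private memory per processor plus the one stashed value, so the $O(1)$ bound is preserved, and it is EREW.

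\textbf{Phase 3 (local fix-up).} Each $p_j$ sweeps $B_j$ left to right, maintaining a running value $r$ initialized to $P_j$. At each cell it stores the inclusive prefix $r + A[i]$ back into $A[i]$ and updates $r$ accordingly; the exclusive version is handled symmetrically. This uses $O(1)$ registers, $O(n/p)$ span, and $O(n)$ work, with exclusive access throughout.

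Summing the three phases gives $O(n/p+\log p)$ span and $O(n)+O(p)=O(n)$ work, since $p\le n$.

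The main obstacle is Phase 2: making sure the earlier prefix theorem applies verbatim to $p$ values stored sparsely inside the input, without any allocation. The stash-and-window step together with the index translation handles this. If the strided view is judged awkward, the fallback is to pack the $s_j$ into $A[0..p-1]$. Each $p_j$ swaps $A[j]$ into its stash; because $p\le n$, each processor needs only one stashed element, since processor $j$ owns cell $j$ for the duration. After the prefix, each processor reads its $P_j$ and swaps the stashed value back. This keeps the whole algorithm within $O(1)$ private space and zero extra shared memory.
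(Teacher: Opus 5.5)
Your proposal is correct and takes the same three-phase approach as the paper: sequential group sums, then the in-place prefix of \Cref{sec:prefix} applied to the $p$ group sums, then a sequential fix-up sweep within each group. Your stash-and-window placement of the group sums (strided, or packed into $A[0..p-1]$) makes explicit an in-place storage detail the paper leaves implicit; one small correction is that with blocks of size $\lceil n/p\rceil$ several trailing blocks can be empty rather than merely shorter, and your packed variant handles this without change.
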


When $p$ is 1, this algorithm becomes the trivial sweep algorithm for prefix sums. 

\begin{corollary}
There exists a Synchronous Strict PIP parallel prefix algorithm.
\end{corollary}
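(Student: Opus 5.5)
The corollary says there is a Synchronous Strict PIP prefix algorithm. I will check the four conditions of the definition directly against the blocked algorithm from the preceding theorem, instantiated with $1 \le p \le n$ processors.

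The algorithm runs in three phases.

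1. \textbf{Sequential block sums.} Each processor $p_j$ owns the contiguous block $[jn/p, (j+1)n/p)$. It sweeps that block and keeps a running sum in a single private register. No shared writes occur, so the input is untouched.

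2. \textbf{Prefix over the block sums.} To avoid allocating an array of $p$ block sums, I use the IJ swap. Each $p_j$ stashes the first element of its block privately and writes its block sum into that slot. The $p$ block sums then sit at stride $n/p$ in the input array. We run the in-place EREW prefix of \Cref{sec:prefix} on these $p$ strided positions, with index arithmetic rescaled by $n/p$. By the first theorem of \Cref{sec:prefix}, this costs $O(\log p)$ span and $O(p)$ work with $O(1)$ private memory per processor. Each processor stashes its resulting exclusive offset. It then swaps its original first element back, which restores the array.

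3. \textbf{Sequential rewrite.} Each processor sweeps its block again, carrying the running sum in private memory starting from its offset, and overwrites each entry with its prefix value.

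Phases 1 and 3 take $O(n/p)$ span and $O(n)$ total work. Phase 2 takes $O(\log p)$ span and $O(p)$ work. The total span is $O(n/p+\log p) \subseteq O((n/p)\,\mathrm{polylog}(n))$, since $\log p \le \log n \le (n/p)\log n$; this gives condition (3). Each processor holds $O(1)$ words: a running sum, one stashed element, and the $O(1)$ state needed by the in-place prefix. This gives condition (2). No shared memory is allocated, giving condition (1).

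\textbf{Main obstacle.} The delicate step is condition (4), strict $\Theta(1)$ space when $p=1$. The recursive prefix of \Cref{sec:prefix} serializes to $O(\log n)$ stack space, as Gu, Obeya, and Shun note. I avoid this by observing that when $p=1$, phase 2 runs on a single element and is trivial: the offset is $0$, so no recursion is ever invoked. The algorithm then degenerates to the two-pass sequential sweep, which uses a constant number of registers. This is strictly in-place, so condition (4) holds.

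I would also check that the EREW constraints are respected in phase 2 when the prefix runs on strided positions. This is routine, since the strided positions are disjoint and each is accessed by exactly one processor per step.
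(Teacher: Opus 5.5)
Your proposal is correct and follows essentially the same route as the paper: the blocked three-phase algorithm (sequential block sums, the in-place prefix of \Cref{sec:prefix} on the $p$ block sums, sequential rewrite), with span $O(n/p+\log p)$, together with the observation that at $p=1$ it degenerates to the strictly in-place sequential sweep. Your IJ-swap placement of the block sums at stride $n/p$ in the input array fills in a detail the paper leaves implicit, namely where those sums live in shared memory without any allocation.
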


\subsection{Packing}\label{sec:packing-sublinear}

Recall that in the packing problem we are given an array of length $n$ containing 1's and 0's. The goal is to move all the 1's to the beginning of the array. 

We divide the input into blocks of size $p$ and 
perform parallel packing on the first block using our algorithm from \Cref{sec:packing-partition}. 
Let $i$ be the index of the first 0 in this packed block. We use a prefix copy operation to share $i$ with all processors. 

In the next step, we perform packing on the second block. Let $j$ be the index of the first 0 after packing the second block. We then merge the two blocks such that the first $2p$ elements of the array are packed. 
To merge the two blocks, we simply move all the 1's from the second block and place them into the first block, starting at $i$. 
Each block is size $p$, so we have enough processors to copy all 1's in the second block at once. 
Any 0's that were overwritten are placed in the original location of the shifted 1's.

This merge step can be done in constant span and $O(p)$ work with exclusive reads and writes since each processor can compute the at most three indices they must read from and write to independently from $i$, $j$, the starting index of the current block, and their processor number. After this merge step, we update $i$ to be the location of the first 0 among the merged blocks and then repeat for the next block. 

There are $n/p$ blocks, and processing and merging each block takes $O(\log p)$ span and $O(p)$ work, dominated by the cost of performing parallel packing on that block. 
Thus, the algorithm takes $O(\frac n p \log p)$ span and $O(n)$ work in total. 

\begin{theorem}
There exists an algorithm that performs parallel packing in $O(n)$ work and $O(\frac n p \log p)$ span given $O(p)$ processors for the EREW PRAM that each have $O(1)$ private space.
\end{theorem}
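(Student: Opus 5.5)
We prove the theorem by analyzing the block-sweep procedure just described. Fix $p \le n$ and assume for simplicity that $p$ divides $n$; otherwise the last block is padded logically, with out-of-range processors idle. We maintain the invariant that after processing block $k$, the prefix $A[0..kp-1]$ is packed: all its 1's occupy $A[0..i-1]$ and all its 0's occupy $A[i..kp-1]$. Here $i$ is the number of 1's seen so far, held in every processor's private memory. Each processor therefore keeps only $O(1)$ words: $i$, $j$, the current block index, its own id, and a constant number of stashed elements.

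The inductive step has three parts.

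\textbf{Pack the block.} We pack block $k+1$, namely $A[kp..(k+1)p-1]$, using the $p$-processor packing algorithm of \Cref{sec:packing-partition}. This algorithm is run on a block of size $p$ with $p$ processors, each holding $O(1)$ private memory and using no shared memory outside the block. It takes $O(\log p)$ span and $O(p)$ work.

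\textbf{Compute $j$.} Let $s = j - kp$ be the number of 1's in the packed block. The processor at the 1/0 boundary of the block knows $s$, since its own value is 1 and its neighbor's is 0. (The neighbor can be read by having every processor stash its value and peek, which is an exclusive read of one extra cell.) That processor broadcasts $s$ by writing it into a designated cell and running an in-place prefix-copy, i.e., a segmented prefix over the block with the stash/window trick. This costs $O(\log p)$ span on the EREW PRAM and avoids concurrent reads.

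\textbf{Merge.} This is a block swap. The 0's in $A[i..kp-1]$ and the 1's in $A[kp..kp+s-1]$ must exchange positions so that the 1's land in $A[i..i+s-1]$. Set $z = kp - i$, the number of 0's already packed. Swapping the first $\min(z,s)$ zeros with the last $\min(z,s)$ ones of the region $A[i..kp+s-1]$ yields a packed configuration, because only the multiset of 0/1 labels matters.

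In general, though, elements carry satellite data, as in partition. In that case we instead realize the rotation of $A[i..kp+s-1]$ that moves the length-$s$ suffix to the front. Processor $q$ handles position $i+q$ for $q < z+s \le 2p$, so each processor covers at most two positions. Each processor stashes its element(s), computes the destination index as $(q + s) \bmod (z+s)$ plus $i$, and writes. The destinations are distinct, which gives constant span, $O(p)$ work, and exclusive reads and writes.

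The main obstacle I expect is this merge step. The concern is that $z$ may be as large as $kp$, much larger than $p$, which would make a full rotation too expensive. The fix is to observe that elements only need to move when they lie within $\min(z,s) \le p$ of the boundary. Concretely, the first $\min(z,s)$ zeros are exchanged with the last $\min(z,s)$ ones. When $s \le z$ this gives the 0's a different relative order than a rotation would, but that is harmless for packing and for partition, where the order within each side is unconstrained. The exchange touches at most $2p$ cells, with each processor computing its own at most three indices. Finally, we update $i \leftarrow i+s$.

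Summing over the $n/p$ blocks gives span $(n/p)\cdot O(\log p)$ and work $(n/p)\cdot O(p) = O(n)$. There is no shared allocation, and each processor uses $O(1)$ private memory, which establishes the theorem. When $p=1$ the procedure degenerates to the sequential strictly in-place sweep.
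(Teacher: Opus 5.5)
Your proposal is correct and is essentially the paper's proof. Both sweep in blocks of size $p$, pack each block with the in-place $O(\log p)$-span routine of \Cref{sec:packing-partition}, broadcast the boundary by a prefix copy, and merge in $O(1)$ span by exchanging at most $p$ ones with the zeros starting at $i$, which gives $O(\frac np\log p)$ span and $O(n)$ work.

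Two remarks on the merge. First, the rotation detour is unnecessary. Its bound $z+s\le 2p$ is false, since $z$ can be $\Theta(kp)$, and swapping whole records already handles satellite data. Second, for $s>z$ your exchange of only $z$ pairs rotates the block's 1's, whereas the paper shifts all $s$ ones to start at $i$ and so preserves their order. That difference does not matter for this theorem, but the paper relies on it when it reuses this merge for the convex hull in \Cref{thm:convex-hull-proc}.
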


When $p$ is 1, this algorithm becomes the following standard sequential packing algorithm. We maintain two pointers, $i$ and $j$, which both begin at the first 0 in the array. At each iteration, advance $j$ forward one index. If $j$'s index now holds a 1, swap its value with $i$'s and advance $i$ forward by one. Otherwise, do nothing. 

By the same observation we make in \Cref{sec:packing-partition}, we also have the following.

\begin{corollary}
There exists an algorithm that performs parallel partition in $O(n)$ work and $O(\frac n p \log p)$ span given $O(p)$ processors for the EREW PRAM that each have $O(1)$ private space.
\end{corollary}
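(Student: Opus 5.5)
The plan is to reduce partition to the blocked packing algorithm of \Cref{sec:packing-sublinear}, just as the linear-processor corollary in \Cref{sec:packing-partition} reduced partition to packing. The elements themselves must move, not only their $0/1$ labels, and the input array cannot be overwritten with labels. I would therefore never materialize the $0/1$ array in shared memory. Whenever the packing algorithm reads a bit at index $k$, the processor instead reads $A[k]$ and compares it to the pivot $q$. It assigns $1$ if $A[k]<q$ and $0$ otherwise.

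The first step is to fix the pivot. All $p$ processors need $q$ at every block, but concurrent reads are forbidden in the EREW model. So at the start, processor $1$ stashes $q=A[n-1]$, and a broadcast gives every processor a private copy of $q$. The broadcast can be the prefix-copy operation already used in \Cref{sec:packing-sublinear}, which costs $O(\log p)$ span and $O(p)$ work. Each processor keeps $q$ in $O(1)$ private memory for the rest of the algorithm. If the pivot should end at its final position, one extra swap at the end places it at the boundary index $i$.

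The second step handles each block of size $p$. Each processor stashes the element at its index and computes its bit from that element and $q$. The processors then run the in-place partition corollary of \Cref{sec:packing-partition} on the block. That routine writes each stashed element, rather than a bit, to its rank position, so the whole element moves. As in the packing proof, the index $i$ of the first ``large'' element so far and the index $j$ within the current block are shared by prefix copy.

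The merge step is unchanged except that it moves elements. Each small element in the current block is swapped with the large element currently occupying its target slot in $[i, i+\text{count})$. Each processor computes its at most three source and destination indices from $i$, $j$, the block offset and its processor number, so all accesses remain exclusive. Because the merge uses swaps, no element is ever destroyed. The invariant is that the prefix $A[0..i)$ holds exactly the processed elements less than $q$, followed by the processed elements at least $q$. Correctness follows by induction over the blocks, as in the packing proof.

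The accounting follows the packing theorem. Each block costs $O(\log p)$ span and $O(p)$ work, dominated by the in-block partition, and there are $n/p$ blocks. The total is therefore $O(\frac np\log p)$ span and $O(n)$ work, with $O(1)$ private words per processor: one stashed element, $q$, $i$, $j$ and a constant number of indices. The only real obstacle is confirming that the merge remains exclusive-access and non-destructive once whole elements replace bits. I would address it by writing the merge explicitly as swaps and checking that the source and target index sets of distinct processors are disjoint.
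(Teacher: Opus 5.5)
Your proposal is correct and takes the paper's route. The paper proves this corollary by the same reduction as in the linear-processor case: label an element $1$ iff it is less than the pivot $q$, then run the blocked packing algorithm. Your pivot broadcast, on-the-fly comparisons, and movement of whole stashed elements just make that reduction explicit.

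One small caution concerns the merge you flagged. If the block's number $c$ of small elements exceeds the number $z$ of large elements in $[i,\text{block start})$, some target slots in $[i,i+c)$ lie inside the current block and hold small elements, not large ones. So instead of swapping every small element with ``the large element in its target slot,'' swap only the last $\min(z,c)$ small elements of the block with the first $\min(z,c)$ large elements of $[i,\text{block start})$. This keeps all index sets disjoint and the accesses exclusive.
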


\begin{corollary}
There exist Synchronous Strict PIP parallel packing and parallel partition algorithms.
\end{corollary}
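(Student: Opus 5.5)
We verify the four conditions of the Synchronous Strict PIP definition for the block-sweep packing algorithm just described. Partition then follows by the reduction of \Cref{sec:packing-partition}: each processor computes its $0/1$ flag by comparing its element with the pivot $q$ on the fly. Conditions (1)--(3) come directly from the preceding theorem. The algorithm allocates no shared memory beyond the input, since every step is a stash, a prefix, or a swap within the array. Each processor keeps $O(1)$ private words: its stashed element, the current block offset, $i$, $j$, and the $O(1)$ state of the in-place prefix and packing from \Cref{sec:prefix,sec:packing-partition}. The span is $O(\frac np\log p)\subseteq O((n/p)\,\mathrm{polylog}(n))$ for every $1\le p\le n$.

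The substantive check is condition (4), that the $p=1$ instantiation is strictly in-place. First we argue that the per-block calls degenerate correctly. With a block of size $1$, packing the block is trivial, and the constituent prefix sums act on a single element. We therefore avoid the recursive up-sweep and down-sweep entirely, so there is no $O(\log n)$ recursion stack; this is exactly the overhead that Gu, Obeya, and Shun flag for the naive serialization. Next we argue that the merge step degenerates correctly. It moves the (at most one) $1$ of the current block to position $i$ and puts the displaced $0$ in its place, which is precisely the swap in the two-pointer sequential algorithm described above. The sequential machine therefore holds only $i$, $j$, the loop index, and one temporary for the swap, i.e.\ $\Theta(1)$ extra words.

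The step I expect to need the most care is the correctness of the merge for general $p$, since the corollary is only meaningful if the algorithm is correct. We would argue it by maintaining the following invariant after processing $k$ blocks: the prefix $A[0..kp-1]$ consists of all $1$'s among the first $kp$ inputs followed by $0$'s, and $i$ is the first $0$ there. When merging the next packed block, whose first $j-kp$ entries are $1$'s, we swap positions $i+r$ and $kp+r$ for $0\le r<j-kp$. Two cases arise. If $j-kp\le kp-i$, the swapped ranges are disjoint, and this is a valid exclusive parallel swap. Otherwise the ranges overlap; in that case we instead move only the $1$'s landing in $0$-slots, i.e.\ swap $A[i+r]$ with $A[j-1-r]$ for $r<\min(kp-i,\,j-kp)$. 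Each processor computes its at most three indices from $i$, $j$, $kp$, and its id, so reads and writes stay exclusive. The invariant then holds with $i\leftarrow i+(j-kp)$, and induction over the $n/p$ blocks gives correctness. Combining this with conditions (1)--(4) yields the corollary.
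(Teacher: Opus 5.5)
Your proposal is correct and follows the paper's route. Conditions (1)--(3) come from the preceding block-sweep packing theorem, condition (4) from the $p=1$ instantiation collapsing to the two-pointer sequential sweep, and partition reduces to packing by comparing each element against the pivot. Your invariant-based check of the merge, including the overlapping case $j-kp>kp-i$ handled by swapping $A[i+r]$ with $A[j-1-r]$, fills in a step the paper only sketches; unlike the paper's order-preserving shift of the $1$'s, it reverses some of them, which is harmless for packing and partition but would matter where the paper reuses this merge for the convex-hull sweep.
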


Thus, our algorithm provides a strong memory-usage guarantee for all $p$
and shaves a loglog factor in span over the previous best in-place parallel partition algorithm by Kuszmaul and Westover~\cite{kuszmaul2020cache}.


\subsection{Deterministic Reservations}

Gu, Obeya, and Shun~\cite{gu2021parallel} made the simple observation that random permutation, list contraction, and tree contraction are decomposable.
As described above, algorithms with this property are a special case of 
our parallel-augmented sweep paradigm as the subproblems do not need to be merged after being processed. 
As a result, we immediately have the following.

\begin{theorem}
There exist algorithms that perform list and tree contraction in $O(n)$ expected work and $O(\min\{\frac n p\log p\log n, n\log p\})$ span w.h.p. given $O(p)$ processors for the EREW PRAM that each have $O(1)$ private space. Futhermore, there exists an algorithm that computes a random permutation of $n$ values in $O(n)$ expected work and $O(\min\{\frac n p\log p\log n, n\log p\})$ span w.h.p. given $O(p)$ processors for the priority-write CRCW PRAM that each have $O(1)$ private space.
\end{theorem}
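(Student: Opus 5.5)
The plan is to use the decomposability of these problems, as observed by Gu, Obeya, and Shun~\cite{gu2021parallel}, in the simplest form of the parallel-augmented sweep. Processing blocks independently suffices and no merge step is needed. Concretely, I would split the input array into $\lceil n/p\rceil$ consecutive blocks of size $p$ and process them one after another. On each block I run the corresponding $O(1)$-private-memory algorithm from \Cref{sec:IJ} or \Cref{sec:random-perm} with the $p$ processors. For list and tree contraction this is the windowed deterministic-reservation algorithm on the EREW PRAM. For random permutation it is the stashed and windowed Knuth-shuffle reservation algorithm on the priority-write CRCW PRAM, restricted to the iterations whose indices lie in the block.

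Correctness follows from decomposability. For random permutation, the sequential Knuth shuffle processes iterations $i$ from $n-1$ downward, and the deterministic-reservation framework commits exactly the sequential result for any prefix of iterations. Processing block $n-p..n-1$, then the next block, and so on therefore reproduces the sequential shuffle. Swap targets $H[i]\le i$ may fall outside the block, but that is harmless, since windowing only temporarily overwrites the two reserved cells and restores them. For list and tree contraction, contracting the nodes whose (random) priorities lie in a block, in block order, gives the same result as the sequential contraction order. Nodes whose priority-based dependencies point into later blocks simply wait; restricting the reservation rule to earlier or same-block neighbors makes the dependency structure the same as in Shun et al. No shared memory beyond the input is allocated, because each block reuses the in-place packing of \Cref{sec:packing-partition} on at most $p$ active processors, with stash space supplied by the block itself.

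For the bounds, each block has size $p$, but its dependency depth is bounded by that of the whole instance. By the analysis of Shun et al.~\cite{shun2014sequential}, each block therefore finishes in $O(\log n)$ rounds w.h.p. By a union bound over at most $n$ blocks, all blocks finish within $O(\log n)$ rounds w.h.p. Each round costs one packing, which takes $O(\log p)$ span, plus $O(1)$ further work. This gives $O(\frac np \log p\log n)$ span. For the other term of the $\min$, the dependency depth within a single block is also at most $p$. I also plan to use the sharper fact that for a random block of $p$ iterations the depth is $O(\log p)$ w.h.p. in $n$ only when $p$ is large enough, which is why both terms appear. Alternatively, when $p$ is small I would run each block with reservation rounds of $O(\log p)$ span, and the block length $p$ bounds the number of rounds. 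This yields the $n\log p$ term, since $(n/p)\cdot p\cdot\log p$ equals it. Work is $O(n)$ in expectation because each block's work is linear in expectation, and the packing compacts the inactive processors.

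The main obstacle is justifying the per-block round bound that gives the $\min$. Specifically, I need to argue that restricting to a block does not increase the dependence depth beyond $O(\log n)$ w.h.p., and also that it is trivially at most $p$. I would handle this by noting that the dependence graph of a block is a subgraph of the full one, with edges to earlier blocks removed since those are already resolved. The depth bound of~\cite{shun2014sequential} then transfers directly.
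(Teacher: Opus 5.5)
Your proposal is correct and follows the paper's own proof: sweep the array in blocks of size $p$, bound each block's conflict depth by the global $O(\log n)$ w.h.p.\ bound of Shun et al.\ (since a block handles only a subset of the conflicting nodes) and deterministically by $p$, then multiply by the $O(\log p)$ per-round packing cost over the $n/p$ blocks. Two asides are unnecessary: the remark about an $O(\log p)$ depth for large $p$ is not where the $\min$ comes from (it comes from the deterministic bound of $p$), and the union bound over blocks is not needed because the single global-depth event already covers every block.
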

\begin{proof}
Follows from a sweep of the input array, applying our in-place algorithms for the respective problems to successive blocks of size $p$.
Recall that the conflict depth of these algorithms is $O(\log n)$ w.h.p. if processing all $n$ entries at once~\cite{shun2014sequential}. Then the conflict depth of each $p$-sized subproblem can be no more than $O(\log n)$ w.h.p. since each processes a subset of all conflicting nodes. 

We note that conflict depth is also surely at most $O(p)$, giving us an alternate span bound of $O(\frac n p \log p \times p) = O(n\log p)$ that holds deterministically. In particular, this implies that the algorithm has $O(n)$ span when $p = O(1)$.
\end{proof}

When $p$ is 1, these problems reduce to their respective standard in-place sweep algorithms 
(see, e.g.,~\cite{shun2014sequential}). 

\begin{corollary}
There exist Synchronous Strict PIP algorithms for parallel list contraction, tree contraction, and random permutation
\end{corollary}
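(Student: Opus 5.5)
We verify the four conditions of the Synchronous Strict PIP definition for the sweep algorithms in the preceding theorem, one condition at a time. The algorithm is the same in all three cases. It sweeps the input in consecutive blocks of size $p$. On each block it runs the corresponding $O(1)$-private-memory algorithm from \Cref{sec:IJ} or \Cref{sec:random-perm}, instantiated with $p$ processors on $p$ elements. Because these problems are decomposable, no merge step is needed between blocks.

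\textbf{Condition (1): no auxiliary shared memory.} Every auxiliary structure of the original algorithm of Shun {\it et al.}~\cite{shun2014sequential} is simulated inside the current block of the input. The reservation array $R$ is replaced by windowing on the block itself. Swapped-out values are held by stashing. Compaction of active processors after each round uses our in-place packing from \Cref{sec:packing-partition}. So each block-level call allocates nothing. The sweep needs only the current block offset, which each processor can keep privately.

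\textbf{Condition (2): $\Theta(1)$ private memory.} Each processor keeps the block offset, a round counter, and the $O(1)$ words used by the block-level algorithm. For random permutation this means the stashed values $A[i]$ and $A[H[i]]$ and the random target $H[i]$. For the contraction problems it means the windowed reservation bit. Since the offset is a single word, the total stays constant.

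\textbf{Condition (3): $O((n/p)\,\mathrm{polylog}(n))$ span.} The preceding theorem gives span $O(\frac n p \log p \log n)$ w.h.p. This is $O((n/p)\log^2 n)$, so the condition holds for every $1\le p\le n$.

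\textbf{Condition (4): $\Theta(1)$ space when run sequentially.} This is the step that needs the most care, because the parallel bounds must not hide an $O(\log n)$ recursion stack at $p=1$. For $p=1$ each block is a single element, so reservations, packing and prefix calls become trivial. The deterministic bound $O(n\log p)$ then becomes $O(n)$. What remains is exactly the standard sequential sweep: the Knuth shuffle for random permutation, and splicing out each node in order for list and tree contraction. All of these run with $O(1)$ extra space.

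The prefix routine underlying packing is invoked only on blocks of size $p$. At $p=1$ it therefore degenerates to the trivial sweep of the earlier corollary and needs no recursion. For the same reason no recursive overhead appears at intermediate $p$ either.

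Since conditions (1) through (4) all hold, the three algorithms are Synchronous Strict PIP.
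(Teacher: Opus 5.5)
Your proposal is correct and follows the paper's route: the corollary comes from the preceding block-sweep theorem (no auxiliary shared memory, $O(1)$ private memory, span $O(\frac{n}{p}\log p\log n)$ w.h.p.) together with the observation that at $p=1$ the sweep degenerates to the standard strictly in-place sequential algorithms (the Knuth shuffle and sequential splicing). Your condition-by-condition check only makes explicit what the paper leaves implicit, with one small imprecision: for random permutation the windowed reservations at $A[H[i]]$ may lie outside the current block, which is harmless because they still lie in the input array.
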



\subsection{2D Convex Hull}

In this section, we adapt our packing algorithm to show the following.

\begin{theorem}\label{thm:convex-hull-proc}
There exists an algorithm to compute the convex hull of $n$ presorted points in the plane in $O(\frac n p \log n)$ span and $O(n\log p + \frac n p \log n)$ work given $O(p)$ processors for the CREW PRAM that each have $O(1)$ private space. 
\end{theorem}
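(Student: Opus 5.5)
We follow the same parallel-augmented sweep pattern used for packing in \Cref{sec:packing-sublinear}. We sweep the presorted array left to right in blocks of $p$ consecutive points. We maintain the invariant that, after processing the first $k$ blocks, the upper hull of the first $kp$ points sits packed at the front of the array in $x$-order. Its size $h$ is broadcast to all processors, each of which stores only $O(1)$ words.

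To process block $k+1$, we proceed in three steps.

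\textbf{Step 1: hull of the new block.} We run the in-place CREW convex hull algorithm of \Cref{sec:convex-hull} on the $p$ points of the new block. This uses $O(\log p)$ span and $O(p\log p)$ work, and leaves the block's upper hull of size $h'$ packed at the start of the block.

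\textbf{Step 2: finding the common tangent.} We merge this hull with the accumulated hull $U$ of size $h \le kp$. Since all points of the new block lie to the right of $U$, the merged upper hull is a prefix of $U$, followed by the bridge, followed by a suffix of the block hull. The bridge is found by the Overmars--van Leeuwen double binary search. One processor, reading both packed arrays concurrently (CREW), finds it in $O(\log n)$ span.

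\textbf{Step 3: compaction.} Let $U$ be cut at index $i$ and the block hull start at index $j$. We copy the surviving block-hull points into positions $i+1,\dots$ of the front region, exactly as the 1's are moved in the packing merge. At most $p$ points move, so $p$ processors do this in $O(1)$ span and $O(p)$ work. Any overwritten discarded points of $U$ are swapped into the vacated slots, so the array remains a permutation of the input and no extra shared memory is used. We then update $h$ and broadcast it with a prefix copy in $O(\log p)$ span.

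\textbf{Accounting.} Each of the $n/p$ rounds costs $O(\log n)$ span, dominated by the tangent search since $\log p \le \log n$. This gives $O(\frac n p\log n)$ span in total.

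For work, the hull computations cost $O(p\log p)$ per block, for $O(n\log p)$ overall. Moving and swapping points costs $O(p)$ per block, for $O(n)$ overall. The binary searches cost $O(\log n)$ per block, for $O(\frac n p\log n)$ overall. Together this gives the claimed $O(n\log p + \frac n p\log n)$ work.

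For $p=1$ the procedure degenerates to an incremental, strictly in-place scan over presorted points. The binary search could even be replaced by a Graham-scan pop loop, though the stated bound is still fine.

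\textbf{Main obstacle.} The step that needs the most care is the merge. We must argue that the discarded portion of $U$ need not be physically removed beyond overwriting. Everything past index $h$ is treated as garbage, and only a constant amount of metadata ($h$, $i$, $j$, the block offset) is kept by the processors.

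We must also check that concurrent reads during the tangent search are the only non-exclusive accesses. This is why the result is stated for CREW rather than EREW.

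Finally, if the bridge search is sequential, it is the $\log n$ term per round that dominates. Parallelizing it would not improve the stated bound, so a single processor doing it while the others idle is acceptable.
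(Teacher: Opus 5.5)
Your proposal is correct and follows essentially the same route as the paper: sweep in blocks of $p$, build each block's hull with the in-place algorithm of \Cref{sec:convex-hull}, find the bridge sequentially with the Overmars--van Leeuwen double binary search, and splice in the surviving part of the block hull via the packing-merge procedure, with the same per-block accounting. One minor correction: a single processor doing the tangent search makes no concurrent reads, so the CREW requirement actually comes from the block-hull subroutine of \Cref{sec:convex-hull}, where many simultaneous double binary searches read the same hull cells, not from your Step~2.
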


Assume the points have been sorted by $x$-coordinate.
We will compute the upper hull 
(again, we note that the entire hull can be computed if the points are sorted radially).
We once again sweep the array from left to right in blocks of size $p$.
We begin by computing the hulls of the first two blocks separately using our in-place convex hull algorithm of \Cref{sec:convex-hull}. Recall that the algorithm permutes the input such that all the hull points are packed to the left of their section of the array in sorted order. 
To merge these hulls, we then have a single processor compute the tangent points between them using the method of~\cite{overmars1981maintenance}.
Say that the tangent points for the first and second hull are stored in indices $i$ and $j$ respectively.

Now we must remove the hull points between the tangent points and then concatenate the remaining points. By convexity and by definition of our algorithm, the surviving points on each hull are contiguous in the array. In addition, since the leftmost point must be on the convex hull, the surviving portion of the first hull is still packed against the left end of the array. 

It remains to move the surviving portion of the second hull so that the merged hull is stored contiguously. 
Since the number of points on the surviving portion of the second hull is no more than $p$, this final step reduces to the merge procedure of our packing algorithm. 

We can compute the convex hull of each block in-place in $O(\log p)$ span and $O(p\log p)$ work. Computing tangents is done sequentially in $O(\log n)$ time. Moving the points in the right hull takes $O(1)$ span and $O(p)$ work. Thus, over all $n/p$ blocks, our algorithm takes $O(\frac n p\log n)$ span and $O(n\log p + \frac n p\log n)$ work and so satisfies~\Cref{thm:convex-hull-proc}.

When $p$ is 1, this algorithm is an in-place variant of Preparata's online convex hull algorithm~\cite{preparata1979optimal}. 

\begin{corollary}
There exists a Synchronous Strict PIP algorithm to construct a 2D convex hull with presorted points. 
\end{corollary}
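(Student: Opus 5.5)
The corollary follows once we check the four conditions of the Synchronous Strict PIP definition for the block-sweep algorithm behind \Cref{thm:convex-hull-proc}, for every $1 \le p \le n$.

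\textbf{Conditions (1) and (2).} No step allocates shared memory.
\begin{itemize}
\item The per-block hull computation is the in-place algorithm of \Cref{sec:convex-hull}. It uses only stashing, windowing and recursive metadata load-sharing, and it leaves the block's hull points packed at the left of its subarray.
\item The tangent computation of~\cite{overmars1981maintenance} is done by a single processor. It needs $O(1)$ words: the indices $i$ and $j$ plus the binary-search bounds.
\item Moving the surviving right-hull points is the merge step of \Cref{sec:packing-sublinear}. There each processor computes its $O(1)$ source and destination indices from $i$, $j$, the block offset and its own id.
\end{itemize}
The sweep itself needs only $O(1)$ extra state per processor: the current block index, the current hull size, and the broadcast tangent indices. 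The broadcast is done by the prefix-copy operation already used in the packing sweep. So every processor holds $\Theta(1)$ private words.

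\textbf{Condition (3).} \Cref{thm:convex-hull-proc} gives span $O(\frac{n}{p}\log n)$, which is $O((n/p)\,\mathrm{polylog}(n))$.

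\textbf{Condition (4).} Set $p=1$. Each block is then a single point, so the per-block hull computation and the packing merge are trivial. Each iteration reduces to one sequential tangent search of~\cite{overmars1981maintenance} against the current hull, which is stored packed at the left of the array. The new point is then written directly after the surviving prefix. This is exactly an in-place form of Preparata's online hull~\cite{preparata1979optimal} and uses $\Theta(1)$ extra space.

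\textbf{Main obstacle.} The one delicate point is the tangent search on the accumulated left hull. Its size is only known to the sweep, not to any block. I would handle this by keeping the hull size in the private memory of the processor performing the sweep, so that it counts toward the $O(1)$ words. Packing the hull contiguously at the left of the array then lets binary search locate hull vertices by index arithmetic alone.

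Since all four conditions hold for every $p$, the algorithm is Synchronous Strict PIP.
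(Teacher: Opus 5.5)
Your proposal is correct and follows the same route as the paper. The paper also gets the corollary directly from \Cref{thm:convex-hull-proc}: no shared memory beyond the input, $O(1)$ private words per processor, and $O(\frac{n}{p}\log n)$ span. It then adds the remark that at $p=1$ the block sweep becomes an in-place variant of Preparata's online hull. Your condition-by-condition check makes explicit what the paper leaves implicit; the only thing to tighten is that placing the new point after the surviving prefix should be a swap, as in the packing merge, not an overwrite.
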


\begin{corollary}
There exists a Strong PIP algorithm to construct a 2D convex hull of $n$ unsorted points.
\end{corollary}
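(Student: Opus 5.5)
The plan is to sort the unsorted points first and then run the presorted convex hull algorithm of the previous theorem (the $O(\log n)$-span, $O(n\log n)$-work CREW algorithm with $O(1)$ private memory per processor). To be Strong PIP in the sense of Gu, Obeya, and Shun, the combined algorithm must satisfy four conditions: no shared memory beyond the input, $O(\log n)$ private memory per processor, polylogarithmic span, and $O(\log n)$ space when run sequentially.

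For the sorting step I will use the Strong PIP QuickSort from the packing/partition discussion in \Cref{sec:packing-partition}. That algorithm recursively partitions around random pivots using our in-place $O(\log n)$-span partition, which gives $O(\log^2 n)$ span w.h.p. and $O(n\log n)$ expected work, and it allocates no shared memory. Sorting by $x$-coordinate (or radially about an extreme point, if the full hull is wanted) rearranges the input array in place. For the private-memory bound, each processor carries the current subproblem boundaries. Because the recursion depth is $O(\log n)$ w.h.p., keeping this bookkeeping costs at most $O(\log n)$ words, which is within the Strong PIP allowance. Alternatively, one could use the metadata load-sharing trick from \Cref{sec:convex-hull} to reduce it to $O(1)$, but that is not needed here.

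For the hull step, once the input is sorted, I invoke the presorted in-place hull theorem directly. It adds $O(\log n)$ span and $O(n\log n)$ work with $O(1)$ private memory, so the total span is $O(\log^2 n)$ w.h.p. and the total work is $O(n\log n)$ expected, which is polylogarithmic as required.

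The main obstacle is the sequential-space condition. Serializing QuickSort naively uses a recursion stack that can be as deep as $\Theta(n)$ in the worst case. I will handle this with the standard fix: always recurse on the smaller side and loop on the larger side, which bounds the stack at $O(\log n)$ deterministically. For the sequential hull I will not serialize the divide-and-conquer. Instead I use the $p=1$ case of \Cref{thm:convex-hull-proc}, which is the in-place Preparata/Graham-style sweep and uses $O(1)$ space. Combining these, the sequential algorithm uses $O(\log n)$ space, and every condition of the Strong PIP model is met.
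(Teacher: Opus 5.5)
Your overall route matches the paper's. Its proof just composes the presorted hull algorithm with an existing Strong PIP sort (citing~\cite{gu2021parallel,kuszmaul2020cache,hutton2025encoding}). You instead use the partition-based QuickSort claimed in \Cref{sec:packing-partition}, and you check the Strong PIP conditions more explicitly than the paper does.

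There is one hole in your sequential-space argument. You repair only the QuickSort recursion stack, but the partition you call is the $n$-processor IJ-swap algorithm. In that algorithm every processor stashes its original element and its prefix value in private memory. Serializing it on one processor therefore carries $\Theta(n)$ words of simulated stash, not $O(1)$. You need to treat the partition exactly as you treated the hull. In the sequential run, use the $p=1$ instance of the sweep packing/partition of \Cref{sec:packing-sublinear}, which is the standard two-pointer in-place partition. With that substitution, recursing on the smaller side first gives $O(\log n)$ total sequential space, as you intended.

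A smaller point concerns your per-processor private-memory bound of $O(\log n)$. It rests on the recursion depth, which is only $O(\log n)$ w.h.p., so as argued the bound is probabilistic rather than guaranteed. You do not need it. QuickSort has no combine step, so no processor ever returns to a parent subproblem. Each processor needs only the boundaries of its current subproblem plus the $O(1)$ state of the partition, which is $O(1)$ words deterministically.
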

\begin{proof}
Follows by combining the above algorithm with a Strong PIP sorting algorithm~\cite{gu2021parallel,kuszmaul2020cache,hutton2025encoding}.
\end{proof}


\renewcommand{\emph}[1]{\textit{#1}}
\bibliographystyle{IEEETran}
\bibliography{IEEEabrv, IEEEexample, refs}

\end{document}